\newcommand{\bm}[1]{\mbox{\boldmath $ {#1} $}}
\renewcommand{\epsilon}{\varepsilon}
\newtheorem{theorem}{Theorem}[section]
\newtheorem{proposition}[theorem]{Proposition}
\theoremstyle{definition}
\newtheorem{rem}[theorem]{Remark}
\newtheorem{question}[theorem]{Question}
\numberwithin{equation}{section}
\numberwithin{theorem}{section}
\begin{document}

\title[The Hegselmann-Krause dynamics for equally spaced agents] 
{The Hegselmann-Krause dynamics for equally spaced agents}

\author{Peter Hegarty$^{1,2}$ \and Edvin Wedin$^{1,2}$} 
\address{$^1$Mathematical Sciences, Chalmers, 41296 Gothenburg, Sweden} 
\address{$^2$Mathematical Sciences, University of Gothenburg,  41296 Gothenburg, Sweden} 
\email{hegarty@chalmers.se}

\email{edvinw@student.chalmers.se}



\subjclass[2000]{} \keywords{}

\date{\today}

\begin{abstract} 
We consider the Hegselmann-Krause bounded confidence dynamics for $n$ equally
spaced opinions on the real line, with gaps equal to the confidence bound $r$, which we take to be $1$. We prove rigorous results on the evolution of this configuration, which confirm hypotheses previously made based on simulations for small values of $n$. Namely, for every $n$, the system evolves as follows: after every $5$ time steps, a group of $3$ agents become disconnected at either end and collapse to a cluster at the subsequent step. This continues until there are fewer than $6$ agents left in the middle, and these finally collapse to a cluster, if $n$ is not a multiple of $6$. In particular, the final configuration consists of $2 \lfloor n/6 \rfloor$ clusters of size $3$, plus one cluster in the middle of size $n \, ({\hbox{mod $6$}})$, if $n$ is not a multiple of $6$, and the number of time steps before freezing is $5n/6 + O(1)$. We also consider the dynamics for arbitrary, but constant, inter-agent spacings $d \in [0,\, 1]$ and present 
three main findings. Firstly we prove that the evolution is periodic also at 
some other, but not all, values of $d$, and present numerical evidence that for all $d$ something ``close'' to periodicity nevertheless holds. Secondly, we exhibit a value of $d$ at which the behaviour is periodic and the time to freezing is $n + O(1)$, hence slower than that for $d = 1$. Thirdly, we present numerical evidence that, as $d \rightarrow 0$, the time to freezing may be closer, in order of magnitude, to the diameter $d(n-1)$ of the configuration rather than the number of agents $n$. 
\end{abstract}

\maketitle



\setcounter{section}{0} 

\setcounter{equation}{0} 

\section{Introduction}\label{sect:intro}

The Hegselmann-Krause (HK) bounded confidence model of opinion dynamics, in its original one-dimensional setting introduced in \cite{HK}, works as follows. We have a finite number $n$ of agents, indexed by the integers $1,\, 2,\dots,\,n$. Time is measured discretely and the opinion of agent $i$ at time $t \in \mathbb{N} \cup \{0\}$ is represented by a real number $x_{t}(i)$, where the convention is that $x_{t}(i) \leq x_{t}(j)$ whenever $i \leq j$. There is a fixed parameter $r > 0$ such that the dynamics are given by
\begin{equation}\label{eq:update}
x_{t+1}(i) = \frac{1}{|\mathcal{N}_{t}(i)|} \sum_{j \in \mathcal{N}_{t}(i)} x_{t}(j),
\end{equation}
where $\mathcal{N}_{t}(i) = \{j : |x_{t}(j) - x_{t}(i)| \leq r \}$. As the dynamics are obviously unaffected by rescaling all opinions and the confidence bound $r$ by a common factor, we can assume without loss of generality that $r = 1$. 
\par Let $(x(1),\dots, \, x(n))$ be a vector of opinions. We say that agents $i$ and $j$ \emph{agree} if $x(i) = x(j)$. A maximal set of agents that agree is called a \emph{cluster}, and the number of agents in a cluster is called its \emph{size}. The configuration $\bm{x} = (x(1),\dots, \, x(n))$ is said to be \emph{frozen} if $|x(i) - x(j)| > 1$ whenever $x(i) \neq x(j)$. It is easy to see that, if $\bm{x}_t$ and $\bm{x}_{t+1}$ are related as in (\ref{eq:update}), then $\bm{x}_{t+1} = \bm{x}_t$ if and only if $\bm{x}_t$ is frozen. Thus once opinions obeying the HK-dynamics become frozen, they will remain so for all future time. 
\par Perhaps the most fundamental result about the HK-dynamics is that any configuration of opinions will freeze in finite time. There are multiple proofs of this in the literature, but the same fact is true for a wide class of models of which HK is just one particularly simple example, see \cite{C}. More interestingly, the time taken for a configuration to freeze is bounded by a universal function of the number $n$ of agents. Currently, the best upper bound is $O(n^3)$, due to \cite{BBCN}. An important open problem in the field is to find the optimal bound. 
\par It was noted in \cite{BBCN}, and even earlier in \cite{MBCF}, that one definitely cannot do better than an $O(n)$ bound. For suppose we start from the configuration $\bm{\mathcal{E}}_n = (1,\, 2,\dots, \, n)$, so opinions are equally spaced with gaps equal to the confidence bound. Then it is not hard to see that, as the configuration updates, if $i < n/2$ then the opinions of agents $i$ and $(n+1)-i$ will remain constant as long as $t < i$, while both will change at $t=i$. Hence, the time it takes for the configuration $\bm{\mathcal{E}}_n$ to freeze is at least $n/2$. 
\par Intuitively, $\bm{\mathcal{E}}_{n}$ seems like a good candidate for a configuration which converges as slowly as possible simply because, at the outset, opinions are placed as far apart as they can be while retaining an unbroken chain of influence. As it turns out, this intuition is badly wrong - in a companion paper \cite{WH} we will exhibit configurations of $n$ agents which, as $n \rightarrow \infty$, take time $\Omega (n^2)$ to freeze. Configurations of equally spaced agents nevertheless remain interesting for other reasons. Krause \cite{Kr} has observed, based on simulations for values of $n$ up to $100$ or so, that the configuration $\bm{\mathcal{E}}_n$ seems to evolve in a very regular manner. Our main result confirms, and makes completely precise, Krause's hypotheses. Before stating it, we introduce some graph-theoretic terminology. Let $(x(1),\dots, \, x(n))$ be a vector of opinions. We can define a {\em receptivity graph} $G$, whose nodes are the $n$ agents and where an edge is placed between agents $i$ and $j$ whenever 
$|x(i) - x(j)| \leq 1$. We say that agents $i$ and $j$ are \emph{connected} if they are in the same connected component of the receptivity graph. Observe that every connected component of $G$ is an interval of agents and that $i$ is disconnected from $i+1$ if and only if $x(i+1) > x(i) + 1$. We can now state our theorem:

\begin{theorem}\label{thm:main}
Let $n \geq 2$ be an integer, and write $n = 6k + l$ where $0 \leq l \leq 5$. Suppose that at $t=0$ we have the opinion vector $\bm{\mathcal{E}}_n$ and we let it evolve according to (\ref{eq:update}). Then the following occurs:
\par (i) after every fifth time step, a group of three agents will disconnect from either end of the receptivity graph and then collapse to a cluster in the subsequent time step. 
 \par (ii) the final, frozen configuration, will consist of $2k$ clusters of size $3$ with opinions distributed symmetrically about $\frac{n+1}{2}$ plus, if $l > 0$, one cluster of size $l$ with opinion $\frac{n+1}{2}$.
\par (iii) the configuration will freeze at time $t = 5k + \epsilon(l)$, where 
\begin{equation}\label{eq:remainder}
\epsilon(l) = \left\{ \begin{array}{lr} l-1, & {\hbox{if $l \in \{2,\, 3\}$}}, 
\\ l, &  {\hbox{if $l=1$}}, \\ l+1, & 
{\hbox{if $l \in \{0, \, 4, \, 5\}$}}. \end{array} \right.
\end{equation}
\end{theorem}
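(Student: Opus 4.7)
My plan is to combine explicit exact computation for the initial time steps with a symmetry reduction and an inductive argument carrying the pattern through successive cycles. First, I would exploit symmetry: the initial configuration $\bm{\mathcal{E}}_n$ is symmetric about $(n+1)/2$, and the update rule (\ref{eq:update}) preserves this symmetry, so it suffices to analyse only the left end and invoke symmetry at the right.

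For the first cycle, assuming $n$ is large enough that the two ends do not yet interact, I would compute $x_t(i)$ for small $i$ and $0 \le t \le 6$ by direct iteration of (\ref{eq:update}). At each step the neighbourhoods can be determined inductively: interior agents retain $\mathcal{N}_t(i) = \{i-1, i, i+1\}$ until a ``perturbation wave'' from the boundary reaches them, so only a growing but small number of agents near each end have non-integer positions. The algebra is tractable over the rationals, and the key inequalities to verify are $x_5(4) - x_5(3) > 1$ (so that $\{1,2,3\}$ disconnect from the rest at $t = 5$) together with $\max_{i,j \in \{1,2,3\}} |x_5(i) - x_5(j)| < 1$ (ensuring all three merge to a single cluster at $t = 6$). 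An exploratory computation gives $x_5(3) \approx 2.93$ and $x_5(4) \approx 4.05$, consistent with the claim.

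The main obstacle is iterating to subsequent cycles. After the collapse at $t = 6$, the residual configuration on agents $4, 5, \ldots, n-3$ is \emph{not} a translate of $\bm{\mathcal{E}}_{n-6}$; instead, the two or three agents nearest each new boundary have been perturbed away from their integer positions (in the first cycle, $x_6(4) \approx 4.53$ and $x_6(5) \approx 5.02$). One must verify that these perturbations do not disrupt the next disconnect-and-collapse event. I would therefore set up an induction whose hypothesis asserts an explicit formula or recurrence for the perturbed boundary positions at the start of each cycle, and then replay the first-cycle computation from this perturbed initial data, checking that the relative positions at the start of cycle $k+1$ satisfy the same recurrence as those at the start of cycle $k$. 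Before committing to brute-force bookkeeping, I would spend effort looking for a self-similar boundary profile or conserved quantity that encodes this propagation cleanly; the striking regularity asserted in the theorem suggests such structure exists, and identifying it is where I expect the main conceptual difficulty to lie.

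Finally, when fewer than six agents remain in the middle, I would verify the freeze-time formula (\ref{eq:remainder}) by handling each residue class $l \in \{0,1,2,3,4,5\}$ separately, using the inductive description of the configuration at the start of the final cycle and simulating the last few steps directly. The case-wise definition of $\epsilon(l)$ makes clear that some such case analysis is unavoidable, and small values of $n$ (where the two ends interact from the outset) may need to be checked as base cases by hand.
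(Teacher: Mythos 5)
Your overall architecture --- exact computation of the first cycle at the boundary, reduction by the left--right symmetry, an induction over successive disconnect-and-collapse cycles, and a final case analysis over $n \bmod 6$ --- is the same as the paper's, and your exploratory numbers for $t\le 6$ are correct. You have also correctly located the hard part. But the way you propose to fill it would not work as stated, and this is a genuine gap rather than a routine detail. You hope to find ``an explicit formula or recurrence for the perturbed boundary positions at the start of each cycle,'' i.e.\ a self-similar boundary profile reproduced exactly from one cycle to the next. No such exact invariant exists: if $\mathcal{T}$ denotes the renormalisation map (five HK steps applied to the vector of gaps, followed by deletion of the three gaps belonging to the disconnected group), the boundary profiles at the starts of successive cycles are $\mathbf{1}_{\infty}, \mathcal{T}\mathbf{1}_{\infty}, \mathcal{T}^{2}\mathbf{1}_{\infty},\dots$, which are pairwise distinct; they decrease monotonically toward a fixed point of $\mathcal{T}$ but never reach it. The workable substitute is a \emph{robustness} statement plus a \emph{uniform} bound: the paper's Proposition \ref{prop:claim1} shows the five-step evolution of the receptivity graph is unchanged for every gap vector $\bm{y}$ with $||\bm{y}||\le 1$ and $||\mathbf{1}_{\infty}-\bm{y}||<\frac{7}{79}$ (proved by sandwiching $\bm{y}$ between $\frac{72}{79}\mathbf{1}_{\infty}$ and $\mathbf{1}_{\infty}$ and using that the update matrices have non-negative entries, hence preserve the coordinatewise order), and Proposition \ref{prop:claim2} shows by induction that $||\mathcal{T}^{\tau}\mathbf{1}_{\infty}-\mathbf{1}_{\infty}||<g_1\approx 0.0446<\frac{7}{79}$ for all $\tau$, via coordinatewise bounds decaying geometrically in the index with ratio $\gamma=0.1117$. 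Your induction needs to be recast in this open-condition form; the verification still comes down to a finite list of polynomial inequalities, so the clean conserved quantity you hope to find is not available.

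A second gap: you dispose of the finite-$n$ case by symmetry, but symmetry only halves the problem. It does not justify treating each half as if it evolved like the corresponding initial segment of the infinite system once $t$ exceeds roughly $n/2$ and the two perturbation waves meet in the middle; from that point on the left half genuinely deviates from the infinite-sequence evolution. The paper needs an extra argument here (the replacement operators $\mathcal{R}_{\tau}$ and the accumulated-error bound $\delta_{\infty}\approx 0.0005$ in (\ref{eq:deltat})) to show this deviation stays small enough that the robustness statement still applies at every cycle. Your proposal is silent on this, and the concluding case analysis over $l\in\{0,\dots,5\}$ depends on precisely this quantitative control of the gaps among the last $5$ to $10$ agents in the middle.
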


\begin{rem}\label{rem:kurz}
The formula for the freezing time for $n$ agents can be written as
\begin{equation}\label{eq:kurzformula}
T(n) = 1 + 5 \lfloor \frac{n+2}{6} \rfloor + \frac{1}{3} \left( \sqrt{3} \sin \left( \frac{2\pi (n-1)}{3} \right) - \cos \left( \frac{\pi (n-1)}{3} \right) - (-1) \right).
\end{equation}  
This formula is given in \cite{Ku}, but without proof.
\end{rem}
 
Theorem \ref{thm:main} will be proven in Section 2. In Section 3 we investigate more generally the evolution of a configuration of equally spaced agents, when the inter-agent spacing is an arbitrary number $d \in (0,\, 1]$. We will show that the evolution is periodic also for some other values of $d$, though not all, while numerical and heuristic evidence suggests nevertheless that something ``close'' to periodicity might hold for arbitrary $d$. We will show that, for a small range of values of $d$ slightly above $0.8$, the behaviour is periodic, with groups of four agents disconnecting after every eighth time step, which leads to a freezing time of $n + O(1)$. Thus $\bm{\mathcal{E}}_{n}$ does not even converge most slowly amongst equally spaced configurations. We conjecture, however, that any equally spaced configuration of $n$ agents will freeze in time $n + O(1)$, and present numerical evidence suggesting that, as $d \rightarrow 0$, the freezing time may be closer, in order of magnitude, to the diameter $d(n-1)$ of the configuration rather than the number $n$ of agents.  
  
\setcounter{equation}{0}

\section{Proof of Theorem \ref{thm:main}}\label{sect:pfmain}

For $n \in \mathbb{N} \cup \{0\}$ set
\begin{eqnarray}
\mathbb{R}_{*}^{n} = \{ (x(1),\dots, \, x(n)) \in \mathbb{R}^{n} : x(i) \leq x(j) \;
{\hbox{whenever}} \, i \leq j \}, \\
\mathbb{R}_{+}^{n} = \{ (x(1),\dots, \, x(n)) \in \mathbb{R}^{n} : x(i) \geq 0 \;
{\hbox{for each $i$}} \}.
\end{eqnarray}
If $n \geq 2$, there is a natural map $\phi: \mathbb{R}_{*}^{n} \rightarrow 
\mathbb{R}_{+}^{n-1}$ given by
\begin{equation}\label{eq:map}
\phi \left[ (x(1),\dots, \, x(n)) \right] = (x(2)-x(1),\dots, \, x(n)-x(n-1)).
\end{equation}
The update rule (\ref{eq:update}) can be written in matrix form as
\begin{equation}\label{eq:matrixA}
\bm{x}_{t+1} = A_{t} \, \bm{x}_t,
\end{equation}
where $\bm{x}_t, \bm{x}_{t+1} \in \mathbb{R}_{*}^{n}$ and $A_t$ is a 
row-stochastic $n \times n$ matrix. Note that the entries of $A_t$ depend only on the receptivity graph $G_t$ at time $t$. If $n \geq 2$ then, setting $\bm{y}_t = \phi(\bm{x}_t)$, we can just as well write the update rule as
\begin{equation}\label{eq:matrixB}
\bm{y}_{t+1} = B_t \, \bm{y}_t,
\end{equation}
where $\bm{y}_{t}, \bm{y}_{t+1} \in \mathbb{R}_{+}^{n-1}$ and $B_t$ is an
$(n-1) \times (n-1)$ matrix with non-negative entries, though the row sums will no longer equal one in general. As before, the entries of $B_t$ depend only on the graph $G_t$. We will find it more convenient to work with
(\ref{eq:matrixB}) rather than (\ref{eq:matrixA}), in other words to replace a vector $\bm{x}$ of opinions by a vector $\bm{y} = \phi(\bm{x})$ of gaps between opinions. Observe that $\phi(\bm{\mathcal{E}}_n) = (1, \, 1,\dots, \, 1)$, a vector which we denote $\mathbf{1}_{n-1}$. More generally, if the receptivity graph corresponding to $\bm{x}$ is connected, then the entries of $\phi(\bm{x})$ are bounded above by one.
\par To get a feeling for Theorem \ref{thm:main}, we look at $n=11$ as an example. What is important is what happens during the first five time steps. At $t = 0$ it is obvious that
\begin{eqnarray}
\bm{y}_0 = (1,\, 1,\, 1,\, 1,\, 1,\, 1,\, 1,\, 1,\, 1, \, 1), \\
E(G_0) = \{\{i, \, i+1\} : 1 \leq i \leq 10\}, \\
B_{0}(i,j) = \left\{ \begin{array}{lr} 1/6, & {\hbox{if $i=j=1$ or $i=j=10$}}, 
\\ 0, & {\hbox{if $|i-j| > 1$}}, \\ 1/3, & {\hbox{otherwise}}. \end{array} \right.
\end{eqnarray}
Thus, 
\begin{equation}\label{eq:y1}
\bm{y}_1 = B_0 \, \bm{y}_0 = (0.5, \, 1, \, 1, \, 1, \, 1, \, 1, \, 1, \, 1, \, 1, \, 0.5),
\end{equation}
from which it is in turn clear that $G_1 = G_0$ and $B_1 = B_0$. The entries of all subsequent vectors $\bm{y}_t$ will of course all be rational numbers, but we will write approximations to four decimal places so as to make it easier to keep track of magnitudes. At the next time step we have
\begin{equation}\label{eq:y2}
\bm{y}_2 = B_1 \, \bm{y}_1 = B_{0}^{2} \, \bm{y}_0 \approx (0.4167, \, 0.8333, \, 1, \, 1, \, 1, \, 1, \, 1, \, 1, \, 0.8333, \, 0.4167).
\end{equation}
Thus $G_2 = G_0$ and $B_2 = B_0$ still and so
\begin{equation}\label{eq:y3}
\bm{y}_3 = B_{0}^{3} \, \bm{y}_0 \approx (0.3472, \, 0.75, \, 0.9444, \, 1, \, 1, \, 1, \, 1, \, 0.9444, \, 0.75, \, 0.3472).
\end{equation}
Still $G_3 = G_0$ and so 
\begin{equation}\label{eq:y4}
\bm{y}_4 = B_{0}^{4} \, \bm{y}_0 \approx (0.3079, \, 0.6806, \, 0.8981, \, 0.9815, \, 1, \, 1, \, 0.9815, \, 0.8981, \, 0.6806, \, 0.3079).
\end{equation}
Now finally something happens. Since $\bm{y}_{4}(1) + \bm{y}_{4}(2) < 1$, agents $1$ and $3$ are now connected, and similarly with agents $9$ and $11$. Thus 
$E(G_4) = E(G_0) \cup \{\{1, \, 3\}, \{9, \, 11\}\}$ which leads to
\begin{equation}\label{eq:b4}
B_{4} = \left( \begin{array}{cccccccccc} 
0&0&0&0&0&0&0&0&0&0 \\
\frac{1}{12} & \frac{1}{6} & \frac{1}{4} & 0 & 0 & 0 & 0 & 0 & 0 & 0 \\
\frac{1}{4} & \frac{1}{2} & \frac{5}{12} & \frac{1}{3} &0&0&0&0&0&0 \\
0&0&\frac{1}{3}&\frac{1}{3}&\frac{1}{3}&0&0&0&0&0 \\
0&0&0&\frac{1}{3}&\frac{1}{3}&\frac{1}{3}&0&0&0&0 \\
0&0&0&0&\frac{1}{3}&\frac{1}{3}&\frac{1}{3}&0&0&0 \\
0&0&0&0&0&\frac{1}{3}&\frac{1}{3}&\frac{1}{3}&0&0 \\
0&0&0&0&0&0&\frac{1}{3}&\frac{5}{12}&\frac{1}{2}&\frac{1}{4} \\
0&0&0&0&0&0&0&\frac{1}{4}&\frac{1}{6}&\frac{1}{12} \\
0&0&0&0&0&0&0&0&0&0 
\end{array} \right).
\end{equation}
Hence, 
\begin{equation}\label{eq:y5}
\bm{y}_5 = B_{4} \, \bm{y}_4 \approx (0, \, 0.3636, \, 1.1186, \, 0.9599, \, 0.9938, \, 0.9938, \, 0.9599, \, 1.1186, \, 0.3636, \, 0).
\end{equation}
Here $\bm{y}_{5}(1) = \bm{y}_{5}(10) = 0$ exactly, which means that agents $1$ and $2$ agree, as do agents $10$ and $11$. Moreover, $\bm{y}_{5}(3) = \bm{y}_{5}(8) > 1$, which means that agent $3$ (resp. $8$) has become disconnected from agent $4$ (resp. $9$). This is in accordance with part (i) of Theorem \ref{thm:main}. It is easy to continue and check that agents $1, \, 2, \, 3$ will collapse to a cluster at $t=6$, as will agents $9, \, 10, \, 11$, whereas the remaining agents $4, \, 5, \, 6, \, 7, \, 8$ will collapse to a cluster at $t=11$, in accordance with (\ref{eq:remainder}). 

We can now give a rough outline of the proof of Theorem \ref{thm:main}. Given $n \geq 11$, we proceed as follows:
\\
\\
{\sc Step 1:} Show that the receptivity graph $G_t$, and hence the transition matrix $B_t$, is constant for $t = 0, \, 1, \, 2, \, 3$, whereas $G_4$ contains the two extra edges $\{1, \, 3\}$ and $\{n-2, \, n\}$. Thus $\bm{y}_5 = B_4 \, B_{0}^{4} \, \bm{y}_0$. Then show that $\bm{y}_{5}(3) = \bm{y}_{5}(n-3) > 1$, which implies that three agents become disconnected at each end. If $\bm{y}_{5}(1) + \bm{y}_{5}(2) < 1$, then each group of three agents will collapse to a cluster at $t = 6$. 
\\
\\
Note that we have already completed this step. The calculations above verify it for $n = 11$ and it is clear that, if we then increase $n$, it will not affect how the opinions of the first or last four agents evolve over the first five time steps. Indeed, the value of $n$ cannot be ``felt'' before every agent has changed their opinion at least once which, as we remarked earlier, will not happen 
while $t < n/2$. 
\\
\\
{\sc Step 2:} Thus at $t=5$, three agents break free from each end of the configuration. This leaves us with $n-6$ agents and a corresponding vector of 
gaps $\overline{\bm{y}}_{5} \in \mathbb{R}_{+}^{n-7}$. We now reset time to zero and consider $\overline{\bm{y}}_{5}$ as the new initial configuration. For example, with $n = 11$ we have, by (\ref{eq:y5}), 
\begin{equation}
\overline{\bm{y}}_{5} \approx (0.9599, \, 0.9938, \, 0.9938, \, 0.9599).
\end{equation}
The entries of $\overline{\bm{y}}_{5}$ will lie in $(0, \, 1]$. The idea is to show that they lie sufficiently close to $1$ such that, if still $n - 6 \geq 11$, then the evolution of the receptivity graph over the next five time steps will be exactly the same as if all entries equalled $1$.
To complete the proof, we have to be able to iterate this procedure, and finally verify the theorem directly for $n \leq 10$ - indeed, we need to verify for these values of $n$ that the behaviour is unaffected if the starting values in $\bm{y}_0$ are all in $(0, \, 1]$ and sufficiently close to $1$. 
\\
\par To complete Step 2, it is convenient to extend the HK-model to an infinite sequence of agents, more precisely to a well-ordered infinite sequence so that geometrically there is an agent furthest to the left. Indeed, (\ref{eq:matrixA}) and (\ref{eq:matrixB}) make perfect sense if we regard $\bm{x}_t$ and $\bm{y}_t$ as elements of $\mathbb{R}^{\infty}$, the vector space consisting of all infinite, well-ordered sequences of real numbers, and $A_t$, $B_t$ as appropriate linear operators on this space. Let $\bm{\mathcal{E}}_{\infty} = (1, \, 2, \dots)$ denote the element of $\mathbb{R}^{\infty}$ representing a sequence of equally spaced agents with gaps of one. The obvious analogue of Theorem \ref{thm:main} would be the following:

\begin{theorem}\label{thm:infmain}
The evolution of the configuration $\bm{\mathcal{E}}_{\infty}$ under (\ref{eq:update}) is periodic, namely after every fifth time step a group of three agents will disconnect on the left and then collapse to a cluster at the subsequent time step.
\end{theorem}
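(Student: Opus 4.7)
The plan is to reduce the infinite claim to an inductive iteration of a single ``cycle'', bootstrapping the first cycle directly from the $n=11$ calculation already carried out in the excerpt. The key structural fact is finite speed of propagation: the gap-update operator $B_t$ is local, in the sense that $(B_t \bm{y})(i)$ depends on $\bm{y}(j)$ only for $j$ within a bounded distance of $i$ (at most $2$, because the only new edges that can form in one step span distance $2$ in the receptivity graph). Consequently, the evolution of the leftmost $M$ entries of $\bm{\mathcal{E}}_\infty$ over the first $T$ time steps coincides with that of $\bm{\mathcal{E}}_n$ whenever $n$ is sufficiently large (depending on $M$ and $T$). Applied to the computation for $n=11$, this immediately yields the first cycle for $\bm{\mathcal{E}}_\infty$: the receptivity graph is unchanged for $t \in \{0,1,2,3\}$; at $t=4$ precisely the edge $\{1,3\}$ is added; at $t=5$ one has $\bm{y}_5(1)=0$ and $\bm{y}_5(3)>1$, so agents $1,2$ agree and agents $3,4$ disconnect; at $t=6$ agents $1,2,3$ collapse to one cluster.

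Next I would set up an induction over cycles. For $k \geq 0$, let $\tilde{\bm{y}}^{(k)} \in \mathbb{R}_{+}^{\infty}$ be the gap vector of the active (non-clustered) sub-configuration at absolute time $5k$, with the leftmost surviving agent relabelled as position $1$. Thus $\tilde{\bm{y}}^{(0)} = \mathbf{1}_\infty$, and the first-cycle computation gives $\tilde{\bm{y}}^{(1)} = (311/324,\, 161/162,\, 1,\, 1,\, \ldots)$. The inductive claim is that for every $k \geq 0$ the evolution starting from $\tilde{\bm{y}}^{(k)}$ runs through exactly the standard cycle above over the next five time steps and produces $\tilde{\bm{y}}^{(k+1)}$. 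To make this precise I would exhibit an explicit invariant region $\mathcal{R} \subset \mathbb{R}_{+}^{\infty}$ containing $\mathbf{1}_\infty$ and defined by: (a) all entries in $(0,1]$; (b) only finitely many entries different from $1$; and (c) explicit lower bounds on the leftmost few entries, strong enough to guarantee the margin inequalities $\bm{y}_4(1)+\bm{y}_4(2)<1$ (edge $\{1,3\}$ appears), $\bm{y}_4(2)+\bm{y}_4(3)>1$ and $\bm{y}_4(1)+\bm{y}_4(2)+\bm{y}_4(3)>1$ (no premature extra edges), and $\bm{y}_5(3)>1$ (disconnect), all with positive margin.

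The hard part is verifying that $\mathcal{R}$ is preserved by the cycle map. Inside each phase of constant graph the dynamics is linear, so the cycle map acts on the (finite, though slowly growing with $k$) block of leftmost perturbed entries of $\tilde{\bm{y}}^{(k)}$ as an explicit piecewise-linear transformation. My preferred route would be to locate a fixed point $\tilde{\bm{y}}^{\star}$ of this transformation strictly inside $\mathcal{R}$ and show the map is a contraction near $\tilde{\bm{y}}^{\star}$, forcing all iterates from $\tilde{\bm{y}}^{(0)}$ to converge to $\tilde{\bm{y}}^{\star}$ and stay in $\mathcal{R}$ for every $k$. A more elementary (but tedious) alternative is to propagate sandwich bounds on each leftmost entry through the five steps by hand, using that each $B_t$ is a non-negative matrix whose interior rows are ordinary averaging rows. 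In either case, the concrete non-zero slack already visible in the $n=11$ calculation (for example, $1-\frac{427}{432}=\frac{5}{432}$ for the first inequality) provides the cushion needed to absorb the small perturbations introduced by passing from $\mathbf{1}_\infty$ to a general $\tilde{\bm{y}}^{(k)} \in \mathcal{R}$.
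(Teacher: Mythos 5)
Your overall architecture is the same as the paper's: a robust single-cycle lemma (your ``margin inequalities'' are exactly Proposition \ref{prop:claim1}, which the paper proves by the sandwich/monotonicity comparison between $\frac{72}{79}\mathbf{1}_{\infty}$ and $\mathbf{1}_{\infty}$ that you describe), followed by an invariance statement for the cycle map $\mathcal{T} = S_3 \circ \widetilde{B}_4 \circ \widetilde{B}_0^4$ (Proposition \ref{prop:claim2}). Your ``elementary but tedious alternative'' of propagating coordinate-wise sandwich bounds through one cycle is literally what the paper does, the induction step reducing to the finitely many polynomial inequalities $f_i(\gamma) \leq 0$.

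Two of your choices would fail as stated, however. First, your preferred contraction route: writing $\bm{y} = \mathbf{1}_{\infty} - \bm{u}$, the cycle map becomes the affine map $\bm{u} \mapsto \bm{e} + \mathcal{T}\bm{u}$ with $\bm{e} = (1-a,\, 1-b,\, 0,\, 0, \dots)$, and $||\mathcal{T}|| = 1$ on $l^{\infty}$ because every row sum from the third onwards equals $1$; so there is no contraction in the sup norm, and Banach's theorem gives nothing without first passing to a suitably weighted norm. (The paper sidesteps this by proving monotonicity, $\bm{y}_{\tau+1} \leq \bm{y}_{\tau}$, from the fact that all row sums are $\leq 1$, and extracting convergence from monotonicity plus the explicit bounds.) Second, and more importantly, your invariant region $\mathcal{R}$ --- all entries in $(0,1]$ plus lower bounds on ``the leftmost few'' entries --- is not invariant and cannot be made so: the deficit $1-a \approx 0.04$ is re-injected at position $1$ on every cycle, and from a uniform bound $||\mathbf{1}_{\infty} - \bm{u}_\tau|| \leq \epsilon_\tau$ alone the best recursion you get is $\epsilon_{\tau+1} \leq (1-a) + a\,\epsilon_\tau$, whose iterates tend to $1$, far past the threshold $\frac{7}{79}$ required by the single-cycle lemma. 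What actually closes the induction is a \emph{position-dependent, geometrically decaying} family of bounds $1 - \bm{y}_{\tau}(i) < g_i$ with $g_i \sim \gamma^{\,i-2}(1-a)$, as in (\ref{eq:bound1})--(\ref{eq:bound3}): the first row of $\mathcal{T}$ places weight only $\frac{47}{972}$ on the badly perturbed first coordinate and most of its mass on coordinates whose deficits are already tiny. (Relatedly, your condition that only finitely many entries differ from $1$ excludes the fixed point you propose to contract to, since every coordinate of $\lim_\tau \mathcal{T}^{\tau}\mathbf{1}_{\infty}$ is strictly less than $1$.) With these repairs your plan coincides with the paper's proof.
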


Our strategy will be to first prove Theorem \ref{thm:infmain} and then argue that the behaviour is essentially unaffected when we go back to finite sequences. 
\\
\par To begin with, we define precisely the machinery we need, using the standard notation and terminology of functional analysis. 
An element of $\mathbb{R}^{\infty}$ will be denoted by a well-ordered sequence $\bm{x} = (x(i))_{i = 1}^{\infty}$ of real numbers. Recall that $l^{\infty}$ denotes the subspace of $\mathbb{R}^{\infty}$ consisting of bounded sequences. It is a Banach space with norm $||\bm{x}|| = \sup_{i} |x(i)|$. We let 
$\mathbf{1}_{\infty} = (1, \, 1, \dots)$ denote the element of $l^{\infty}$ 
consisting entirely of ones. 
For a linear operator $T: l^{\infty} \rightarrow l^{\infty}$, its norm is defined
as $||T|| = \sup_{||\bm{x}|| = 1} ||T(\bm{x})||$.
One says that $T$ is bounded if $||T|| < \infty$ and $B(l^{\infty})$ denotes the Banach space of all bounded linear operators on $l^{\infty}$. Let $B = (b(i,\,j))_{i,\,j=1}^{\infty}$ be a doubly-infinite matrix and set 
\begin{equation}\label{eq:sup}
s = \sup_{i} \sum_{j=1}^{\infty} |b(i,\,j)|. 
\end{equation}
If $s$ is finite then the map $T$ given by
\begin{equation}
(T(\bm{x}))(i) = \sum_{j=1}^{\infty} b(i,\,j) \, x(j), \;\;\;\; \bm{x} \in l^{\infty},
\end{equation}
is a well-defined element of $B(l^{\infty})$ of norm $s$. The map can be written as a matrix product $T(\bm{x}) = B\bm{x}$, when $\bm{x}$ is written as an
infinite column. 

We now consider a specific collection of operators $\widetilde{B}_t$, $t = 0,\dots,\,4$ defined by matrices satisfying (\ref{eq:sup}). In all cases, the elements $b(i,\,j)$ will be non-negative and there will be only finitely many non-zero entries in each row, i.e.: for each $i$ we have $b(i,\,j) = 0$ for all $j \gg_{i} 0$. Set
\begin{eqnarray}
\widetilde{B}_{0}(i,\,j) = \left\{ \begin{array}{lr} 1/6, & {\hbox{if $(i,\,j) = (1,\,1)$}}, \\
1/3, & {\hbox{if $|i-j| \leq 1$ and $(i,\,j) \neq (1,\,1)$}}, \\
0, & {\hbox{otherwise}}, \end{array} \right.
\\
\widetilde{B}_3 = \widetilde{B}_2 = \widetilde{B}_1 = \widetilde{B}_0, \\
\widetilde{B}_{4}(i,\,j) = \left\{ \begin{array}{lr} B_{4}(i,\,j), & {\hbox{if $i \leq 3$ and
$j \leq 10$}}, \\ 1/3, & {\hbox{if $i > 3$ and $|i-j| \leq 1$}}, \\
0, & {\hbox{otherwise}}. \end{array} \right. \;\;\;\;
(B_4 \; {\hbox{as in (\ref{eq:b4})}}).
\end{eqnarray}
Let $S_3: l^{\infty} \rightarrow l^{\infty}$ be a threefold leftward shift, i.e.:
\begin{equation}\label{eq:shift}
(S_3 \bm{x})(i) = x(i+3), \;\;\; \forall \; i \geq 1,
\end{equation}
and finally let $\mathcal{T}: l^{\infty} \rightarrow l^{\infty}$ be the
composition
\begin{equation}\label{eq:Top}
\mathcal{T} = S_3 \circ \widetilde{B}_4 \circ (\widetilde{B}_0)^4.
\end{equation}
The point is that, firstly, the evolution under (\ref{eq:update}) of an infinite sequence of equally spaced agents is described for $t \leq 4$ by 
\begin{equation}\label{eq:matrixBtilde}
\bm{y}_{t+1} = \widetilde{B}_t \, \bm{y}_t, \;\;\;\; \bm{y}_0 = \mathbf{1}_{\infty}.
\end{equation}
Secondly, at $t=5$, the first three agents become disconnected.  
Hence, the map $\mathbf{1}_{\infty} \mapsto \mathcal{T} \mathbf{1}_{\infty}$ 
describes what happens if we take a sequence of equally spaced agents, run the HK-dynamics over $5$ time steps, remove the first three agents which have become disconnected from the others, and reset time to zero. These assertions follow from (\ref{eq:y1})-(\ref{eq:y4}) and (\ref{eq:y5}), together with the observation that, whenever $\bm{y}_0$ is a multiple of $\mathbf{1}_{\infty}$, so that all its entries are equal, then $\bm{y}_t (i) = \bm{y}_0 (i)$ for all $i \geq 6$ and all $t \leq 5$. Now we need to show two things:
\\
\\
{\sc Claim 1:} If $||\bm{y}|| \leq 1$ and $||\mathbf{1}_{\infty} - \bm{y}|| < \varepsilon$ for some sufficiently small $\varepsilon > 0$, then $\bm{y} \mapsto \mathcal{T} \bm{y}$ still describes, as above, the evolution over $5$ time steps of a sequence of agents with initial inter-agent spacings given by $\bm{y}$.
\\
\\
{\sc Claim 2:} For all $n \in \mathbb{N}$, $||\mathcal{T}^n \mathbf{1}_{\infty} - \mathbf{1}_{\infty}||$ is sufficiently small so that Claim 1 can be applied. 
\\
\par
Verifying these two claims will not immediately allow us to complete Step 2 above. Given a finite sequence of $n$ equally spaced agents, these two claims allow us to iterate as in Step 2 up as far as $t \approx n/2$, i.e.: as long as agents in the middle have not yet been affected, because up to that point the evolution of either half of the finite sequence is exactly the same as for the corresponding initial segment of the infinite sequence. In order to continue the iteration beyond this time, we will use the precise quantitative estimates obtained in the proofs of the two claims below. Basically, the idea is that the entries of
$\mathcal{T}^n \mathbf{1}_{\infty}$ remain much closer to one than is needed for Claim 1 to hold, so that even though the entries of the finite and infinite sequences diverge when $t > n/2$, the accumulated divergence never becomes so large so that we cannot apply Claim 1. The precise argument will follow the verification of the two claims.
\\
\par
Claim 1 is actually a statement about the operators $\widetilde{B}_t$, $0 \leq t \leq 4$. We prove the following:

\begin{proposition}\label{prop:claim1}
Let $\bm{y}_0 \in l^{\infty}$ represent the gaps between consecutive agents in an infinite sequence of agents, indexed by $1,\,2,\dots$, and let $G_0$ be the corresponding receptivity graph. For each $t = 0,\,1,\dots,\,4$, let $\bm{y}_{t+1} = \widetilde{B}_t \, \bm{y}_t$, and let $G_{t+1}$ be the corresponding graph. If $||\bm{y}_0|| \leq 1$ and $||\mathbf{1}_{\infty} - \bm{y}_0|| < \frac{7}{79}$ then the following hold: 
\par (i) for each $0 \leq t \leq 3$, $G_t$ contains exactly the edges
$\{i,\,i+1\}, \, i \in \mathbb{N}$, 
\par (ii) $G_4$ contains the edges of $G_3$ plus the additional edge $\{1,\,3\}$,
\par (iii) $G_5$ contains all the edges of $G_4$ except for the edge $\{3,\,4\}$. Moreover, $\bm{y}_{5}(1) = 0$.
\\
Hence Claim 1 holds with $\varepsilon = \frac{7}{79}$. 
\end{proposition}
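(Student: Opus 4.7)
The two hypotheses together force every coordinate of $\bm{y}_0$ into the interval $(72/79,\,1]$, i.e.\ $(72/79)\mathbf{1}_\infty < \bm{y}_0 \leq \mathbf{1}_\infty$ entry-wise. Let
\[
\bm{u}_t := \widetilde{B}_{t-1}\circ\cdots\circ\widetilde{B}_0 \, \mathbf{1}_\infty, \qquad t=0,1,\dots,5,
\]
denote the unperturbed trajectory, whose first several entries are explicit rationals coinciding with the values displayed for $n=11$ in \eqref{eq:y1}--\eqref{eq:y5}. The key observation is that as long as the operators $\widetilde{B}_0,\dots,\widetilde{B}_{t-1}$ correctly describe the HK update (that is, as long as (i)--(iii) have been verified through time $t$), the non-negativity of all their matrix entries together with linearity yields the entry-wise sandwich
\[
\tfrac{72}{79}\,\bm{u}_t \;<\; \bm{y}_t \;\leq\; \bm{u}_t,
\]
the strict lower bound holding in every coordinate $i$ for which $\bm{u}_t(i)>0$. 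The plan is to push this sandwich inductively through the five time steps, converting each graph-theoretic claim into a finite list of numerical inequalities among the rational entries of $\bm{u}_t$.

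\medskip

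For (i), the statement that $G_t$ is the chain graph for $t\in\{0,1,2,3\}$ reduces to (a) $\bm{y}_t(i)\leq 1$, immediate from $\bm{y}_t\leq\bm{u}_t\leq\mathbf{1}_\infty$, and (b) $\bm{y}_t(i)+\bm{y}_t(i+1)>1$ for every $i\geq 1$, which by the sandwich follows from the deterministic inequality $\bm{u}_t(i)+\bm{u}_t(i+1)\geq 79/72$. A direct rational computation shows this inequality is strict except in the single critical case $t=3$, $i=1$, where
\[
\bm{u}_3(1)+\bm{u}_3(2)=\tfrac{25}{72}+\tfrac{3}{4}=\tfrac{79}{72};
\]
there the sandwich still gives the strict inequality $\bm{y}_3(1)+\bm{y}_3(2)>(72/79)(79/72)=1$, and this sharp equality is exactly what pins down the constant $\varepsilon=7/79$.

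\medskip

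For (ii), direct computation gives $\bm{u}_4(1)+\bm{u}_4(2)=427/432<1$, so $\bm{y}_4(1)+\bm{y}_4(2)\leq 427/432<1$ and the edge $\{1,3\}$ must appear, while applying the lower half of the sandwich to $\bm{u}_4(i)+\bm{u}_4(i+1)$ for $i\geq 2$ rules out any further new edge. For (iii), row $1$ of $\widetilde{B}_4$ is identically zero, so $\bm{y}_5(1)=0$; and with $\bm{u}_5(3)=5799/5184$ the sandwich yields
\[
\bm{y}_5(3) \;>\; \tfrac{72}{79}\cdot\tfrac{5799}{5184} \;=\; \tfrac{5799}{5688} \;>\; 1,
\]
so the edge $\{3,4\}$ must break. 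A handful of further routine sandwich checks (for instance that $\bm{y}_5(4)\leq \bm{u}_5(4)=311/324<1$ and $\bm{y}_5(4)+\bm{y}_5(5)>1$) confirm that no other edge of $G_4$ is broken and no unexpected new edge appears in $G_5$. The only real obstacle is the arithmetic bookkeeping for a small finite collection of rational inequalities; the critical case is the sharp equality $\bm{u}_3(1)+\bm{u}_3(2)=79/72$, which is precisely what forces the constant to be $\varepsilon=7/79$.
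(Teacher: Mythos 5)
Your argument is correct and is essentially the paper's own proof: both rest on the entry-wise sandwich $\frac{72}{79}\bm{u}_t < \bm{y}_t \leq \bm{u}_t$ obtained from monotonicity of the non-negative operators $\widetilde{B}_t$, with the same comparison trajectory ($\bm{z}_t = \frac{72}{79}\bm{u}_t$ in the paper) and the same critical sharp case $\bm{u}_3(1)+\bm{u}_3(2) = \frac{79}{72}$ identifying $\varepsilon = \frac{7}{79}$. Your explicit rational values ($\frac{427}{432}$, $\frac{5799}{5184}$, $\frac{311}{324}$) all check out against the paper's decimal approximations.
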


\begin{proof}
We already know that (i)-(iii) hold when $\bm{y}_0 = \mathbf{1}_{\infty}$. Now take $\bm{z}_0 = \frac{72}{79} \mathbf{1}_{\infty}$. As remarked earlier, since $\bm{z}_0$ is a multiple of $\mathbf{1}_{\infty}$, all its entries from the sixth onwards will remain unchanged for $t \leq 5$. So, when considering the evolution of the graph $G_t$, it suffices to consider the first five entries of each vector $\bm{z}_t$. Since every entry of $\bm{z}_0$ lies between $1/2$ and $1$, it is immediate that $G_0$ is as claimed in (i). We have $\bm{z}_t = \frac{72}{79} \bm{y}_t$ for all $t$, where the $\bm{y}_t$ are as in (\ref{eq:y1})-(\ref{eq:y4}) and (\ref{eq:y5}) except that we have an infinite sequence of ones from the sixth position onwards. Numerically, 
\begin{equation}\label{eq:z1}
\bm{z}_1 \approx (0.4557, \, 0.9114, \, 0.9114, \, 0.9114, \, 0.9114, 
\, \dots), 
\end{equation}
\begin{equation}\label{eq:z2}
\bm{z}_2 \approx (0.3791, \, 0.7594, \, 0.9114, \, 0.9114, \, 0.9114, \, \dots), 
\end{equation}
\begin{equation}\label{eq:z3}
\bm{z}_3 \approx (0.3165, \, 0.6835, \, 0.8608, \, 0.9114, \, 0.9114, \, \dots),
\end{equation}
\begin{equation}\label{eq:z4}
\bm{z}_4 \approx (0.2806, \, 0.6203, \, 0.8186, \, 0.8945, \, 0.9114, \, \dots), 
\end{equation}
\begin{equation}\label{eq:z5}
\bm{z}_5 \approx (0, \, 0.3314, \, 1.0195, \, 0.8748, \, 0.9058, \, \dots).
\end{equation}
Note that $\bm{z}_3(1) + \bm{z}_{3}(2) = 1$ exactly, as may be readily checked. 
For sequences $\bm{u}, \bm{v} \in l^{\infty}$, write $\bm{u} < \bm{v}$ if
$u(i) < v(i)$ for every $i$ and observe that, if  
$M$ is a doubly-infinite matrix satisfying (\ref{eq:sup}), having non-negative entries and at least one non-zero entry in each row, then 
$\bm{u} < \bm{v} \Rightarrow M\bm{u} < M\bm{v}$. Now if $||\bm{w}_0|| \leq 1$ and $||\mathbf{1}_{\infty} - \bm{w}_0|| < \frac{7}{79}$, then $\bm{z}_0 < \bm{w}_0 \leq \bm{1}_{\infty}$. Hence $\bm{z}_t < \bm{w}_t \leq \bm{y}_t$ would hold for all $t \leq 5$, but for the fact that the first entry of each vector is zero at $t=5$. By comparing (\ref{eq:y1})-(\ref{eq:y4}), (\ref{eq:y5}) with (\ref{eq:z1})-(\ref{eq:z5}), and noting in particular that both $\bm{w}_{3}(1) + \bm{w}_{3}(2) > 1$ and $\bm{w}_{5}(3) > 1$, it follows immediately that (i)-(iii) all hold for the $\bm{w}_t$. 
\end{proof}

We now turn to Claim 2. First of all, let us write out $\mathcal{T}$ explicitly as a doubly-infinite matrix. The upper-left $3 \times 11$ block is
\begin{equation}\label{eq:Tblock}
\left( \begin{array}{ccccccccccc} 
\frac{47}{972}&\frac{59}{486}&\frac{5}{27}&\frac{17}{81}&\frac{5}{27}&\frac{10}{81}&\frac{5}{81}&\frac{5}{243}&\frac{1}{243}&0&0 \\
\frac{1}{54}&\frac{5}{81}&\frac{10}{81}&\frac{5}{27}&\frac{17}{81}&\frac{5}{27}&\frac{10}{81}&\frac{5}{81}&\frac{5}{243}&\frac{1}{243}&0 \\
\frac{1}{243}&\frac{5}{243}&\frac{5}{81}&\frac{10}{81}&\frac{5}{27}&\frac{17}{81}&\frac{5}{27}&\frac{10}{81}&\frac{5}{81}&\frac{5}{243}&\frac{1}{243}
\end{array} \right). 
\end{equation}
Every other entry in the first three rows is zero, and every row from the fourth onwards is just a rightward shift of the third row, i.e.:
\begin{equation}\label{eq:Trest}
\mathcal{T}(i,\, j) = \left\{ \begin{array}{lr} 0, & {\hbox{if $i \leq 3$ and
$j \geq 12$}}, \\ 0, & {\hbox{if $i \geq 4$ and $j=1$}}, \\
\mathcal{T}(i-1,\, j-1), & {\hbox{if $i \geq 4$ and $j \geq 2$}}. \end{array}
\right.
\end{equation}
Note that the sum of the entries in every row from the third onwards equals 
$1$. Let 
\begin{eqnarray}
a = \sum_{j=1}^{9} \mathcal{T}(1,\,j) = \frac{311}{324} \approx 0.9599, \\
b = \sum_{j=1}^{10} \mathcal{T}(2,\,j) = \frac{161}{162} \approx 0.9938.
\end{eqnarray}

\begin{proposition}\label{prop:claim2}
Let $\bm{y}_0 := \bm{1}_{\infty}$ and for all $\tau \geq 0$, $\bm{y}_{\tau+1} := 
\mathcal{T} \bm{y}_{\tau}$. Then 
\par (i) $\bm{y}_{\tau+1} \leq \bm{y}_{\tau}$ for all $\tau$.
\par (ii) Let $\gamma := 0.1117$. Then for all $\tau \geq 1$,
\begin{eqnarray}
1-a \leq 1 - \bm{y}_{\tau}(1) < (1+\gamma)(1-a), \label{eq:bound1}\\
1-b \leq 1 - \bm{y}_{\tau}(2) < \gamma (1-a) + (1+\gamma)(1-b), \label{eq:bound2}\\
1 - \bm{y}_{\tau}(i) < \gamma^{i-2} (1-a), \;\,\; \forall \, i \geq 3. \label{eq:bound3}
\end{eqnarray}
$\;\;$ (iii) the sequence $(\bm{y}_{\tau})_{\tau=0}^{\infty}$ converges in $l^{\infty}$
to a fixed point of $\mathcal{T}$. 
\end{proposition}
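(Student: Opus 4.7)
The plan is to exploit the structure of $\mathcal{T}$: from row three onwards $\mathcal{T}$ is stochastic and shift-invariant, so writing $c_k := \mathcal{T}(3,k)$, one has $(\mathcal{T}\bm{y})(i) = \sum_{k=1}^{11} c_k\, \bm{y}(i+k-3)$ for $i \geq 3$, while rows $1$ and $2$ are ``anomalous,'' with row sums $a$ and $b$. Passing to the deficits $\varepsilon_\tau(i) := 1 - \bm{y}_\tau(i)$, the update $\bm{y}_{\tau+1} = \mathcal{T}\bm{y}_\tau$ becomes
\[
\varepsilon_{\tau+1}(1) = (1-a) + \sum_{k=1}^{9}\mathcal{T}(1,k)\,\varepsilon_\tau(k),\qquad \varepsilon_{\tau+1}(2) = (1-b) + \sum_{k=1}^{10}\mathcal{T}(2,k)\,\varepsilon_\tau(k),
\]
\[
\varepsilon_{\tau+1}(i) = \sum_{k=1}^{11} c_k\,\varepsilon_\tau(i+k-3)\quad (i \geq 3).
\]
All three parts will follow by induction on $\tau$ using these formulae.

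Part (i) is essentially immediate. The base case $\bm{y}_1 = \mathcal{T}\mathbf{1}_{\infty}$ has entries $(a,\, b,\, 1,\, 1,\dots)$, so $\bm{y}_1 \leq \bm{y}_0$; and since $\mathcal{T}$ has non-negative entries, applying it preserves the coordinatewise inequality, giving $\bm{y}_{\tau+2} \leq \bm{y}_{\tau+1}$ from $\bm{y}_{\tau+1} \leq \bm{y}_\tau$.

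For part (ii) I would carry out a simultaneous induction on all three bounds. The base case $\tau = 1$ holds trivially from $\varepsilon_1(1) = 1-a$, $\varepsilon_1(2) = 1-b$, $\varepsilon_1(i) = 0$ for $i \geq 3$, with every upper bound strict because $\gamma > 0$. The lower bounds in (\ref{eq:bound1})–(\ref{eq:bound2}) propagate for free since one is adding a non-negative sum to $1-a$ or $1-b$. The upper bounds are where the real work lies. Plugging the inductive hypothesis into each of the three recursions and dividing through by $(1-a)$, each reduces to a finite linear inequality in $\gamma$ and $\rho := (1-b)/(1-a) = 2/13$. In particular, for (\ref{eq:bound3}) with $i \geq 6$ the shift-invariance of $\mathcal{T}$ collapses the entire family into the single condition $\sum_{k=1}^{11} c_k\, \gamma^{k-3} \leq 1$, leaving only $i=3,4,5$ as exceptional cases to verify individually. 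Strict inequality propagates through the induction because the weights $a_k, b_k, c_k$ are positive where they are involved and the hypothesis bounds are themselves strict. The only real obstacle of the whole proposition is the numerical check: the constant $\gamma = 0.1117$ is almost certainly pinned down by the tightest single inequality — I expect it to be the one governing $\varepsilon_{\tau+1}(1)$, into which both anomalous coordinates feed back through the dense top row of $\mathcal{T}$.

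Part (iii) is then a soft consequence of (i) and (ii). By (i), $\bm{y}_\tau(i)$ is decreasing in $\tau$ for every $i$; by (ii), it is bounded below by $1 - (1+\gamma)(1-a) > 0$, so $\bm{y}_\tau$ converges pointwise to some $\bm{y}_\infty \in l^\infty$. To upgrade to $l^\infty$-convergence, the geometric tail bound (\ref{eq:bound3}) gives $|\bm{y}_\tau(i) - 1| \leq \gamma^{i-2}(1-a)$ uniformly in $\tau$: for any $\varepsilon > 0$ pick $I$ with $\gamma^{I-2}(1-a) < \varepsilon/2$, note that $|\bm{y}_\tau(i) - \bm{y}_\infty(i)| < \varepsilon$ automatically for $i \geq I$, and use pointwise convergence on the finitely many coordinates $i < I$ to find a uniform $\tau_0$ for the rest. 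Finally, the row sums of $\mathcal{T}$ are bounded by $1$, so the supremum in (\ref{eq:sup}) equals $1$ and $\mathcal{T}$ is a bounded, hence continuous, linear operator on $l^\infty$; passing to the limit in $\bm{y}_{\tau+1} = \mathcal{T}\bm{y}_\tau$ gives $\bm{y}_\infty = \mathcal{T}\bm{y}_\infty$.
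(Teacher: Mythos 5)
Your proposal is correct and follows essentially the same route as the paper: monotonicity of part (i) from the row sums being at most one, a simultaneous induction on $\tau$ for part (ii) that reduces each upper bound to a polynomial inequality in $\gamma$ (with the shift-invariant rows collapsing to a single generic condition and the binding constraint indeed coming from the first row, whose critical root is $\approx 0.1116$), and part (iii) as a consequence of (i) and (ii). The only differences are cosmetic: the paper folds $i=5$ into the generic case rather than treating it as exceptional, carries out the numerical root-finding explicitly in \texttt{Matlab}, and leaves the $l^\infty$-convergence argument of (iii), which you spell out, as a one-line remark.
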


\begin{rem}
The reason for using $\tau$ instead of $t$ is that multiplication by $\mathcal{T}$ is to be thought of as representing the evolution of a configuration of agents over $5$ time steps. Hence, one can informally imagine that ``$\tau = 5t$''.
\end{rem}

\begin{proof}
Since the sum of the entries in every row of the matrix of $\mathcal{T}$ is at most one it is immediate that $\mathcal{T}(\bm{1}_{\infty}) \leq \mathbf{1}_{\infty}$. Since $\mathcal{T}$ has non-negative entries, it follows by induction that $\bm{y}_{\tau+1} \leq \bm{y}_{\tau}$ for every $\tau$, which proves (i). Note that (iii) follows from (i) and (ii), so it remains to prove (ii). 
\par The inequalities in (\ref{eq:bound1})-(\ref{eq:bound3}) are obviously satisfied when $\tau=1$. Hence, by (i), the left-hand inequalities in (\ref{eq:bound1}) and (\ref{eq:bound2}) will be satisfied for all $\tau \geq 1$. For the right-hand inequalities, we proceed
by induction on $\tau$. First consider $i = 1$. We have
\begin{equation*}
\bm{y}_{\tau+1}(1) = \sum_{j=1}^{9} \mathcal{T}(1,\,j) \, \bm{y}_{\tau}(j) \Rightarrow 
1 - \bm{y}_{\tau+1}(1) = (1-a) + \sum_{j=1}^{9} \mathcal{T}(1,\,j) \, (1 - \bm{y}_{\tau}(j)).
\end{equation*}
Assuming (\ref{eq:bound1})-(\ref{eq:bound3}) hold at step $\tau$, 
it follows that 
(\ref{eq:bound1}) holds at step $\tau+1$
if and only if $f_{1}(\gamma) \leq 0$, where
\begin{equation}\label{eq:row1}
f_{1}(\gamma) = (1-a)\left[-\gamma + \mathcal{T}(1,\,1)\,(1+\gamma) + 
\mathcal{T}(1,\,2) \, \gamma + 
\sum_{j=3}^{9} \mathcal{T}(1,\,j) \, \gamma^{j-2} \right] + 
(1-b)\, \mathcal{T}(1,\,2) \, (1+\gamma).
\end{equation}
We checked with \texttt{Matlab} that $f_{1}(\gamma) = 0$ has two solutions in the interval $[0,\,1]$, at $\gamma_1 \approx 0.1116$ and $\gamma_{2} \approx 0.9624$, and
that $f_{1}(\gamma) < 0$ for $\gamma \in (\gamma_1, \gamma_2)$. Thus (\ref{eq:bound1}) holds also at step $\tau+1$, provided $\gamma = 0.1117$. 
\par Next consider $i = 2$. Analogous calculations lead to the requirement that
$f_{2}(\gamma) \leq 0$, where
\begin{equation}\label{eq:row2}
f_{2}(\gamma) = (1-a)\left[ - \gamma + \mathcal{T}(2,\,1) \, (1+\gamma) + \mathcal{T}(2,\,2) \, \gamma + \sum_{j=3}^{10} \mathcal{T}(2,\,j) \, \gamma^{j-2} \right] + 
(1-b)\left[ - \gamma + \mathcal{T}(2,\,2) \, (1+\gamma) \right].
\end{equation}
One checks that $f_{2}(\gamma) = 0$ has one solution in $[0,\,1]$, at $\gamma_3 \approx 0.03$, and that $f_{2}(\gamma) < 0$ for all $\gamma \in (\gamma_3,\,1]$. Thus (\ref{eq:bound2}) holds also at step $\tau+1$, provided $\gamma = 0.1117$. 
\par For $i = 3$, we are led to the requirement that $f_{3}(\gamma) \leq 0$, where
\begin{equation}\label{eq:row3}
f_{3}(\gamma) = (1-a)\left[ - \gamma + \mathcal{T}(3,\,1) \, (1+\gamma) + \mathcal{T}(3,\,2) \, \gamma + 
\sum_{j=3}^{11} \mathcal{T}(3,\,j) \, \gamma^{j-2} \right] + (1-b) \, \mathcal{T}(3,\,2) \, (1+\gamma).
\end{equation}
One checks that $f_{3}(\gamma) < 0$ for all $\gamma \in (\gamma_4, \, \gamma_5)$, where $\gamma_4 \approx 0.008$ and $\gamma_5 \approx 0.9965$. 
\par For $i=4$, using (\ref{eq:Trest}) we are led to the requirement that $f_{4}(\gamma) \leq 0$ where
\begin{equation}\label{eq:row4} 
f_{4}(\gamma) = (1-a)\left[ -\gamma^{2} + \mathcal{T}(3,\,1) \, \gamma + 
\sum_{j=2}^{11} \mathcal{T}(3,\,j) \, \gamma^{j-1} \right] + (1-b) \, \mathcal{T}(3,\,1) \, (1+\gamma).
\end{equation}
One checks that $f_{4}(\gamma) < 0$ for all $\gamma \in (\gamma_6, \, \gamma_7)$, where $\gamma_6 \approx 0.0430$ and $\gamma_7 \approx 0.9996$. 
\par Finally, every $i \geq 5$ will lead to the condition that $f_{5}(\gamma) \leq 0$ where
\begin{equation}\label{eq:row5}
f_{5}(\gamma) = - \gamma^{2} + \sum_{j=1}^{11} \mathcal{T}(3,\, j) \, \gamma^{j-1}.
\end{equation}
One checks that $f_{5}(\gamma) < 0$ for all $\gamma \in (\gamma_8, \, 1]$, where
$\gamma_8 \approx 0.0786$. This completes the induction step.
\end{proof}
Claim 2 follows from Proposition \ref{prop:claim2}. Indeed, for each $i \geq 1$, let $g_i = g_{i}(a,b,\gamma)$ be the bound on the appropriate right-hand side in (\ref{eq:bound1})-(\ref{eq:bound3}). Since the $g_i$ form a decreasing sequence, we have proven that $||\mathcal{T}^{n} \bm{1}_{\infty} - \bm{1}_{\infty}|| < g_1 \approx 0.0446 < \frac{7}{79}$ for all $n$. Together with Proposition \ref{prop:claim1}, this already suffices to prove Theorem \ref{thm:infmain}.
\par Turning to finite sequences of agents, we are now ready to complete the proof of our main result.

\emph{Proof of Theorem \ref{thm:main}}.
Let $\bm{x}_0$ be any finite or infinite vector of equally spaced opinions, and $\bm{y}_0$ the corresponding vector of gaps. We start with a couple of observations:
\\
\\
(a) Suppose $\bm{x}_0$ is finite of length $n$, and thus $\bm{y}_0$ has length $n-1$. Given the updating rule (\ref{eq:update}), at all times $t$ we will have
$x_{t}(i) = x_{t}(n+1-i)$ for $i = 1,\dots, \, \lfloor \frac{n+1}{2} \rfloor$ 
and similarly $y_{t}(i) = y_{t}(n-i)$ for $i = 1,\dots, \, \lfloor \frac{n}{2} \rfloor$. Hence in order to understand how the configuration evolves, it suffices to understand the vectors $\left(y_{t}(1),\dots,\, y_{t}\left(\lfloor \frac{n}{2} \rfloor\right)\right)$ for all $t$. 
\\
\\
(b) Suppose the receptivity graph $G_t$ is as in Proposition \ref{prop:claim1} for all $t \leq 5$. Then, for any $k \in \mathbb{N}$ in the case of infinite vectors, or for any $k \leq n/2$ in the case of finite vectors, the values of $y_{t}(i)$ for all $0 \leq t \leq 5$, $1 \leq i \leq k$, only depend on $y_{0}(j)$, for $1 \leq j \leq k+5$.
\\
\par
Now let $n \geq 11$ be given. Write $\lfloor n/2 \rfloor := r$. 
For each $0 \leq \tau \leq m$, where $m$ is a bound to be determined in a moment, we define a sequence of ``replacement'' operators $\mathcal{R}_{\tau}: l^{\infty} \rightarrow l^{\infty}$ as follows:
\begin{equation}\label{eq:replace}
\left\{ \begin{array}{lr} (\mathcal{R}_{\tau} \bm{x})(i) = x(i), & {\hbox{if $i \leq r-3\tau$ or $i > r-3\tau+5$}}, \\ (\mathcal{R}_{\tau} \bm{x}(r-3\tau+j) = x(r-3\tau-j), & 
{\hbox{if $1 \leq j \leq 5$ and $n$ is even}}, \\
(\mathcal{R}_{\tau} \bm{x})(r-3\tau+j) = x(r-3\tau+j+1), & {\hbox{if $1 \leq j \leq 5$ and
$n$ is odd.}} \end{array} \right.
\end{equation}
The definition makes sense as long as $\tau \leq m$, where 
\begin{equation}\label{eq:mbound}
m = \left\{ \begin{array}{lr} \lfloor \frac{r-6}{3} \rfloor, & {\hbox{if $n$ is even}}, \\ \lfloor \frac{r-5}{3} \rfloor, & {\hbox{if $n$ is odd.}} \end{array} \right.
\end{equation}
Note that each $\mathcal{R}_{\tau}$ is linear and of norm one, since every 
entry in $\mathcal{R}_{\tau} \bm{x}$ is also an entry in $\bm{x}$. 
\par Set $\bm{y}_0 = \bm{z}_0 = \bm{1}_{\infty}$. For each $0 \leq 
\tau \leq m$, let  
$\Theta_{\tau} := \mathcal{T} \circ \mathcal{R}_{\tau}$ and define inductively 
\begin{equation}\label{eq:yzdefs}
\bm{y}_{\tau+1} = \mathcal{T} \bm{y}_{\tau}, \;\;\; \bm{z}_{\tau+1} = \Theta_{\tau} \, \bm{z}_{\tau}, 
\;\;\; \delta_{\tau} := ||\bm{z}_{\tau} - \bm{y}_{\tau}||.
\end{equation}
We are interested in bounding the $\delta_{\tau}$. Clearly, $\delta_0 = 0$. 
Since $||T|| = 1$ we have
\begin{equation}\label{eq:Tbort}
\delta_{\tau+1} = ||\mathcal{T}(\mathcal{R}_{\tau} \, \bm{z}_{\tau}) - \mathcal{T}(\bm{y}_{\tau})||
\leq ||\mathcal{R}_{\tau} (\bm{z}_{\tau}) - \bm{y}_{\tau}||. 
\end{equation}
Next, by the triangle inequality and the properties of $\mathcal{R}_{\tau}$, 
\begin{eqnarray*}
||\mathcal{R}_{\tau} (\bm{z}_{\tau})  - \bm{y}_{\tau}|| = ||\mathcal{R}_{\tau}(\bm{z}_{\tau}) - 
\mathcal{R}_{\tau} (\bm{y}_{\tau}) + \mathcal{R}_{\tau} (\bm{y}_{\tau}) - \bm{y}_{\tau} || 
\leq ||\mathcal{R}_{\tau} (\bm{z}_{\tau}) - \mathcal{R}_{\tau} (\bm{y}_{\tau})|| + 
||  \mathcal{R}_{\tau} (\bm{y}_{\tau}) - \bm{y}_{\tau} || = \\
= ||\mathcal{R}_{\tau}(\bm{z}_{\tau} - \bm{y}_{\tau})|| + || \mathcal{R}_{\tau} (\bm{y}_{\tau}) - \bm{y}_{\tau} ||
\leq ||\bm{z}_{\tau} - \bm{y}_{\tau}|| +  ||\mathcal{R}_{\tau} (\bm{y}_{\tau}) - \bm{y}_{\tau}||
= \delta_{\tau} +  ||\mathcal{R}_{\tau} (\bm{y}_{\tau}) - \bm{y}_{\tau}||.
\end{eqnarray*}
Thus we have the recurrence
\begin{equation}\label{eq:deltarecur}
\delta_{\tau+1} \leq \delta_{\tau} +  ||\mathcal{R}_{\tau} (\bm{y}_{\tau}) - \bm{y}_{\tau}||.
\end{equation}
We now use Proposition \ref{prop:claim2} to bound the second term on the right of (\ref{eq:deltarecur}). It follows immediately from the
definition of $\mathcal{R}_{\tau}$ and the fact that the numbers $g_i$ are decreasing with $i$ that
\begin{equation}\label{eq:rdiff}
||\mathcal{R}_{\tau} (\bm{y}_{\tau}) - \bm{y}_{\tau} || \leq \left\{ \begin{array}{lr}
g_{r-3\tau-5}, & {\hbox{if $n$ is even}}, \\ 
g_{r-3\tau-4}, & {\hbox{if $n$ is odd.}} \end{array} \right.
\end{equation}
Hence, for all $\tau \leq m$,
\begin{equation}\label{eq:deltat}
\delta_{\tau} \leq \delta_{\infty} :=
\sum_{k=1}^{\infty} g_{1+3k} = (1-a)\left(\frac{\gamma^{2}}{1-\gamma^{3}} \right) \approx 0.0005.
\end{equation}
Thus for all $\tau \leq m$, and using Proposition \ref{prop:claim1} again,
\begin{equation}\label{eq:triangle}
||\bm{z}_{\tau} - \bm{1}_{\infty}|| \leq ||\bm{z}_{\tau} - \bm{y}_{\tau}|| + 
||\bm{y}_{\tau} - \bm{1}_{\infty}|| \leq \delta_{\infty} + g_1 < \frac{7}{79}.
\end{equation}
Since the operator $\mathcal{R}_{\tau}$ just replaces some elements of $\bm{z}_{\tau}$ with others, we also have $||\mathcal{R}_{\tau}(\bm{z}_{\tau}) - \bm{1}_{\infty}|| < 7/79$ for all $\tau \leq m$. Thus Proposition \ref{prop:claim1} holds for each vector $\mathcal{R}_{\tau}(\bm{z}_{\tau})$. The point is that this is exactly what we need in order to deduce that a finite sequence of $n$ equally spaced agents, with initial gaps of one, will evolve as claimed in Theorem \ref{thm:main}, up to $\tau=m+1$, in other words up to time $t=5(m+1)$, where $m$ is related to $n$ by (\ref{eq:mbound}). This is a direct consequence of observations (a) and (b) above. 
\par To complete the proof of the theorem, it just remains to consider what happens from time $5(m+1)$ onwards. At this time, a total of $6(m+1)$ agents will have become disconnected, and by the next time step will have all collapsed into $2(m+1)$ clusters of size $3$ each. We will be left, at time $5(m+1)$, with a group of somewhere between $5$ and $10$ agents in the middle, depending on the
value of $n \; ({\hbox{mod $6$}})$. The gaps between these remaining agents
will still be less than and close to one, indeed we can use 
a bound similar to (\ref{eq:deltat}). We will need to add the $k=0$ term, but can also start the sum from either $g_1, \, g_2$ or $g_3$, depending on $n \; ({\hbox{mod $6$}})$. Set 
\begin{eqnarray}
\delta_{1,\infty} := \sum_{k=0}^{\infty} g_{1+3k} = (1-a)\left(1 + \gamma + 
\frac{\gamma^{2}}{1-\gamma^{3}} \right) \approx 0.0451, \label{eq:delta1} \\
\delta_{2,\infty} := \sum_{k=0}^{\infty} g_{2+3k} = (1-a)\left(\gamma + 
\frac{\gamma^{3}}{1-\gamma^{3}} \right) + (1-b)(1+\gamma) \approx 0.0114, \label{eq:delta1} \\
\delta_{3,\infty} := \sum_{k=0}^{\infty} g_{3+3k} = (1-a)\left( 
\frac{\gamma}{1-\gamma^{3}} \right) \approx 0.0045. \label{eq:delta1}
\end{eqnarray}
$\;\;$ One can check exhaustively that the remaining middle component of $G_{5(m+1)}$, depending on $n \; ({\hbox{mod $6$}})$, will satisfy the following:
\\
\\
{\sc Case 1:} $n \equiv 5 \, ({\hbox{mod $6$}})$.
\\
\par We have $5$ agents left, with gaps represented by the vector 
$(y_1, \, y_2, \, y_2, \, y_1)$, where
\begin{eqnarray*}
y_1 \geq 1-\delta_{1,\infty} - g_1 \approx 0.9103, \\
y_2 \geq 1-\delta_{1,\infty} - g_2 \approx 0.9435.
\end{eqnarray*}
{\sc Case 2:} $n \equiv 0 \, ({\hbox{mod $6$}})$.
\\
\par We have $6$ agents left, with gaps represented by the vector 
$(y_1, \, y_2, \, y_3, \, y_2, \, y_1)$, where $y_1, \, y_2$ satisfy the
same inequalities as in Case 1 and 
\begin{eqnarray*}
y_3 \geq 1-\delta_{1,\infty} - g_3 \approx 0.9504.
\end{eqnarray*}
{\sc Case 3:} $n \equiv 1 \, ({\hbox{mod $6$}})$.
\\
\par We have $7$ agents left, with gaps represented by the vector 
$(y_1, \, y_2, \, y_3, \, y_3, \, y_2, \, y_1)$, where
\begin{eqnarray*}
y_1 \geq 1-\delta_{2,\infty} - g_1 \approx 0.9440, \\
y_2 \geq 1-\delta_{2,\infty} - g_2 \approx 0.9773, \\
y_3 \geq 1 - \delta_{2,\infty} - g_3 \approx 0.9841.
\end{eqnarray*}
{\sc Case 4:} $n \equiv 2 \, ({\hbox{mod $6$}})$.
\\
\par We have $8$ agents left, with gaps represented by the vector 
$(y_1, \, y_2, \, y_3, \, y_4, \, y_3, \, y_2, \, y_1)$, where
$y_1, \, y_2, \, y_3$ satisfy the same inequalities as in Case 3 and
\begin{eqnarray*}
y_4 \geq 1-\delta_{2,\infty} - g_4 \approx 0.9881.
\end{eqnarray*}
{\sc Case 5:} $n \equiv 3 \, ({\hbox{mod $6$}})$.
\\
\par We have $9$ agents left, with gaps represented by the vector 
$(y_1, \, y_2, \, y_3, \, y_4, \, y_4, \, y_3, \, y_2, \, y_1)$, where
\begin{eqnarray*}
y_1 \geq 1-\delta_{3,\infty} - g_1 \approx 0.9509, \\
y_2 \geq 1-\delta_{3,\infty} - g_2 \approx 0.9842, \\
y_3 \geq 1-\delta_{3,\infty} - g_3 \approx 0.9910, \\
y_4 \geq 1-\delta_{3,\infty} - g_4 \approx 0.9950.
\end{eqnarray*}
{\sc Case 6:} $n \equiv 4 \, ({\hbox{mod $6$}})$.
\\
\par We have $10$ agents left, with gaps represented by the vector 
$(y_1, \, y_2, \, y_3, \, y_4, \, y_5, \, y_4, \, y_3, \, y_2, \, y_1)$, where
$y_1, \, y_2, \, y_3, \, y_4$ satisfy the same inequalities as in Case 5 and
\begin{eqnarray*}
y_5 \geq 1-\delta_{3,\infty} - g_5 \approx 0.9954.
\end{eqnarray*}
When considering the further evolution of one of these six configurations, we can adopt the same strategy as in the proof of Proposition \ref{prop:claim1}. We look on the one hand at what happens when each $y_i$ has the minimum value allowed by the inequalities, on the other at what happens when each $y_i = 1$, and then 
``interpolate'' between these two extremes. We performed all computations in \texttt{Matlab} and it turns out that the following occurs:
\\
\\
{\sc Case 1:} The five agents will always collapse to a single cluster after $6$ time steps, though there are two different possibilities for the sequence of receptivity graphs. One possibility is that agent $3$ will become connected to agents $1$ and $5$ after three steps. If that happens, agents $1$ and $2$ will agree after four steps, as will agents $4$ and $5$. However, these pairs will still be at distance greater than one from one another, so the final collapse to a cluster will require two further time steps. The other possibility is that agent $3$ will become connected to $1$ and $5$ for the first time after four steps. In that case, the merged pair $\{1,\, 2\}$ will be within distance one of the 
merged pair $\{4, \, 5\}$ at step five and thus we collapse to a cluster again at step six. 
\\
\par In all remaining cases, there is only one possible sequence of receptivity graphs. 
\\
\\
{\sc Case 2:} Agents
$1$ and $3$ will become connected after four steps, as will agents $4$ and $6$.
At step five, agents $1, \, 2, \, 3$ will disconnect from $4, \, 5, \, 6$, and
each group of three will collapse to a cluster at step six. 
\\
\\
{\sc Case 3:} At step five, we will split into three components, consisting of
$\{1, \, 2, \, 3\}$, $\{4\}$ and $\{5, \, 6, \, 7\}$. Each boundary component collapses to a cluster at step six.  
\\
\\
{\sc Case 4:} At step five, we will split into three components, consisting of 
$\{1, \, 2, \, 3\}$, $\{4, \, 5\}$ and $\{6, \, 7, \, 8\}$. Each component collapses to a cluster at step six. 
\\
\\
{\sc Case 5:} At step five, we will split into three components, consisting of 
$\{1, \, 2, \, 3\}$, $\{4, \, 5, \, 6\}$ and $\{7, \, 8, \, 9\}$. Each boundary component will collapse to a cluster at step six, whereas the middle component will collapse at step seven.
\\
\\
{\sc Case 6:} At step five, we will split into three components, consisting of 
$\{1, \, 2, \, 3\}$, $\{4, \, 5, \, 6, \, 7\}$ and $\{8, \, 9, \, 10\}$. Each boundary component collapses to a cluster at step six, but the middle component will only collapse at step ten.
\\
\par
The above analysis serves to verify that the values of $\varepsilon(l)$ in (\ref{eq:remainder}) are correct for every $0 \leq l \leq 5$, and thus completes the proof of Theorem \ref {thm:main}.

\setcounter{figure}{0}
\setcounter{table}{0}
\setcounter{equation}{0}

\section{The general case of equally spaced opinions}\label{sect:generald}

In this section we will study the evolution of a finite or infinite sequence of opinions, updating according to (\ref{eq:update}) and initially equally spaced with gaps equal to $d$, where $d$ is an arbitrary real number in the interval $(0, \, 1]$. Theorem \ref{thm:main} describes a striking regularity in the 
evolution when $d=1$. One motivation for looking at other values of $d$ is the observation that there exists some $\epsilon > 0$ such that Theorems \ref{thm:main} and \ref{thm:infmain} hold verbatim for all $d \in (1-\epsilon, \, 1]$. Indeed, this follows immediately from an examination of the argument in Section 2. To begin with, observe that the initial receptivity graph is identical for all $d > 1/2$, namely it is just a chain. Next, let's consider an infinite sequence of agents. Proposition \ref{prop:claim1} tells us that the evolution up to $t = 5$ is identical for all $d > 72/79$. In Proposition \ref{prop:claim2} we just replace $\bm{1}_{\infty}$ by a multiple $d \bm{1}_{\infty}$ of itself and rescale all inequalities - for example, the analogue of (\ref{eq:bound1}) will read: $d(1-a) \leq d - \bm{y}_{\tau}(1) < d(1+\gamma)(1-a)$. Clearly, for $d$ sufficiently close to $1$, such bounds will still be good enough to be able to apply Proposition \ref{prop:claim1}, even allowing for the additional perturbations introduced when returning to finite sequences, which we dealt with at the end of Section \ref{sect:pfmain}. 
\par The obvious question then is to what extent something like Theorems \ref{thm:main} and \ref{thm:infmain} continues to hold as $d$ decreases. 
Figure \ref{fig:ini} 
below suggests that something very interesting may be going on. 
\begin{figure*}[ht!]
  \includegraphics[trim = 20mm 90mm 20mm 70mm, clip,width=1\textwidth]{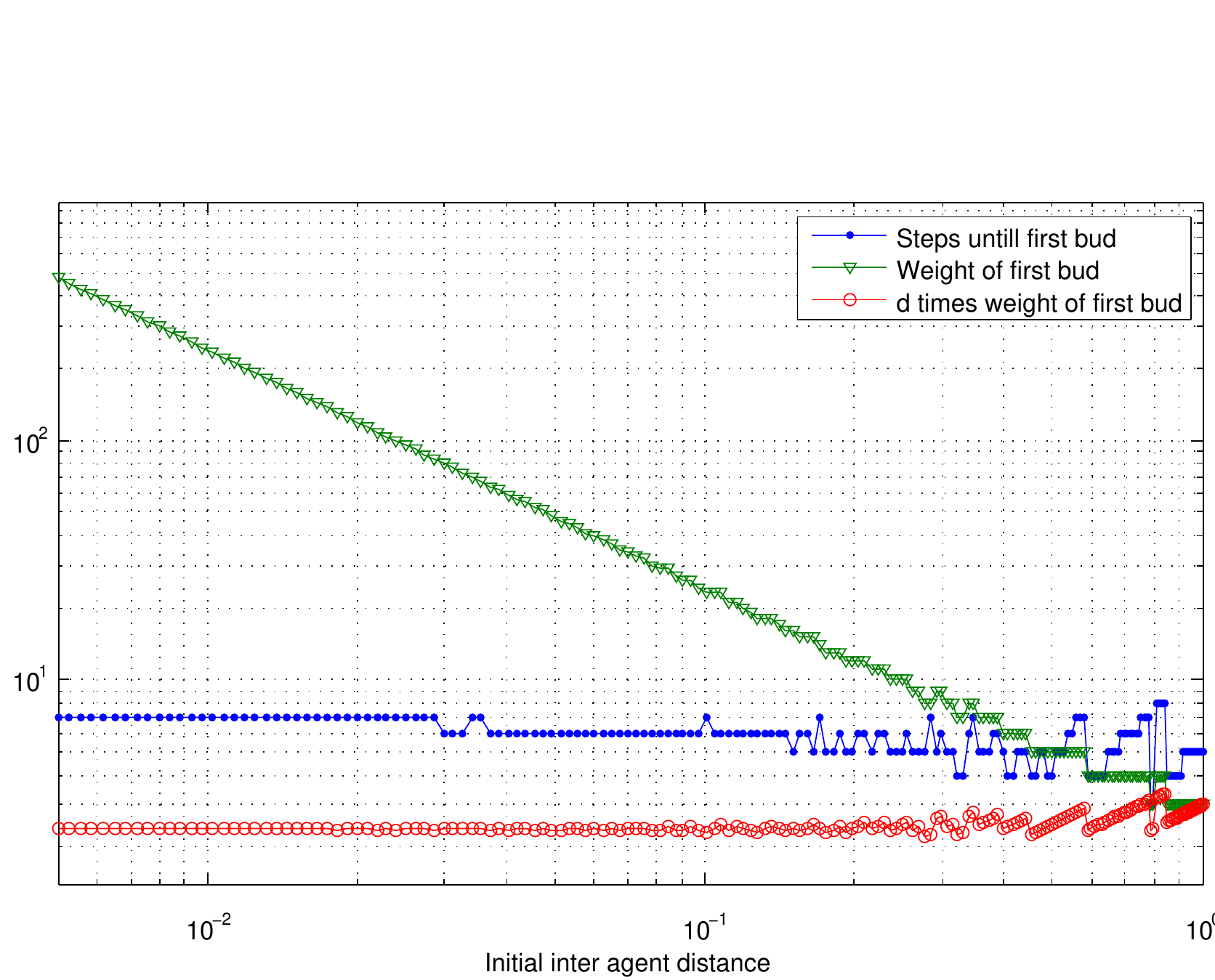} 
\caption{Data for the time of first disconnection.}
\label{fig:ini}
\end{figure*}
On the horizontal axis, we plot $d$ on a logarithmic scale. 
Starting from 1 and decreasing to 0.005, we ran for each value of $d$ a simulation of $\lfloor \frac{30}{d}\rfloor$ agents.
\footnote{The precise expression for the vector of values of $d$ is \texttt{0.005.\^{}(linspace(1,0,100)).\^{}2}
}
This gives the configuration a diameter of $\approx 30$, so that for all times $t < 15$ the edge of the configuration look like the edge of an infinite system with the same $d$.
The blue dots indicate the first time $t = L(d)$ at which the receptivity graph disconnects. The green triangles record the number $M(d)$ of agents that disconnect at time $L(d)$ from each end of the graph. Note that $M(d)$ remains much smaller than $n$ which means that, for every $d$ we simulated, the behaviour up to time $L(d)$ would be identical for any sufficiently large number (depending on $d$) of agents and hence, in particular, for an infinite sequence of agents, except that in that case the graph will only disconnect on the left. This explains our notation. The red circles record the
products $d \cdot M(d)$. 
\par There are two obviously striking features in Figure \ref{fig:ini}. 
Firstly, $L(d)$ does not seem to be increasing as $d$ decreases. It is not constant - recall that Theorem \ref{thm:infmain} says that $L(1) = 5$, while the figure shows that $L(d)$ can attain any value among $\{3,\dots,9\}$. Indeed, all these values are already attained in the interval $d \in \left( \frac{1}{2}, \, 1 \right]$, see Table \ref{tab:ini} below. Moreover, $L(d) = 6$ for ``most'' values of $d$ simulated with $log_{10}(d) \approx -1$, whereas $L(d) = 7$ for most values with $log_{10}(d) \approx -2$. This may suggest logarithmic growth. When $d$ gets really small then simulations become impractical. However, it is obvious to ask

\begin{question}\label{quest:ldbound}
Is there an absolute positive constant $L$ such that $L(d) \leq L$ for all $d \in (0, \, 1]$ ? If not, is it at least true that $L(d) = O \left( \log \frac{1}{d} \right)$ ?
\end{question}

The second striking feature is that the products $d \cdot M(d)$ seem to be
hardly changing $d \rightarrow 0$. In fact, they are all close to $2.38$.
We do not know if they converge to a limit, however.  

\begin{question}\label{quest:dmd} 
Is it true that $d \cdot M(d)$ converges to a limit as $d \rightarrow 0$ ? If so, what is the limit ? If not, is it at least true that $d \cdot M(d) = \Theta(1)$ for all $d \in (0, \, 1]$ ?
\end{question}

These two questions concern only the behaviour up to the first disconnection in the receptivity graph. The meat in Theorems \ref{thm:main} and \ref{thm:infmain} is that this behaviour is then repeated, forever in the case of an infinite sequence of agents and until there are less than $2 \cdot M(1) = 6$ agents left in the finite case. It turns out that such simple periodicity does not hold for arbitrary $d$, though there is a lot of evidence that the behaviour is always ``close'' to periodic. The rest of this section will be concerned with developing this assertion. If we accept it for the moment, then following on from Questions \ref{quest:ldbound} and \ref{quest:dmd} we can ask about the freezing time for an arbitrary configuration of equally spaced agents. If the evolution were perfectly periodic for all $d$ then, as $d \rightarrow 0$, Question \ref{quest:dmd} would imply that the freezing time is $O[L(d) \cdot (dn)]$ for $n \gg_{d} 0$. Notice that $d(n-1)$ is the diameter of the configuration. As Question \ref{quest:ldbound} suggests $L(d)$ grows very slowly, if at all, this would imply that, for general $d$, the diameter is a much better measure of the freezing time than the number of agents. Indeed, a universal bound on $L(d)$ would in turn suggest an affirmative answer to the following: 

\begin{question}\label{quest:freeze}
Does there exist a universal positive constant $\kappa$ such that, for any finite configuration of equally spaced agents obeying (\ref{eq:update}), the freezing time is at most $\kappa \cdot D$, where $D$ is the diameter of the configuration ?
\end{question}

\par To appreciate how far we are from being able to answer any of the questions posed so far, we cannot even prove that, for all $d$ and $n \gg_{d} 0$, the receptivity graph must actually disconnect at all ! Nor can we prove even an $O(n)$ bound for the freezing time, independent of $d$. On the other hand, the intuition that equal spacings of $d=1$ should yield the most slowly converging configuration turns out to be false, even amongst equally spaced configurations. We will prove below that for a short interval of $d$-values slightly above $0.8$, the freezing time is $n + O(1)$. We conjecture that this is the worst-case scenario:

\begin{question}\label{quest:linearbound}
Is it true that any configuration of $n$ equally spaced opinions, which evolve according to (\ref{eq:update}), will freeze by time $n + c$, where $c > 0$ is an absolute constant ?
\end{question}

Let's now go into more detail. In what follows, we 
denote $\bm{\mathcal{E}}_{n, \, d} := (1, \, 1+d, \dots, \, 1+(n-1)d) \in \mathbb{R}^{n}$ and $\bm{\mathcal{E}}_{\infty, \, d} := (1, \, 1+d, \, \dots) \in \mathbb{R}^{\infty}$.
Proposition \ref{prop:claim1} says that the behaviour up to the first disconnection in the receptivity graph is identical for all $d$ in a half-open interval to the left of $d=1$. Basically, the reason for this is that, before the disconnection, edges are only added to the graph, not removed. Consider a fixed edge $\{i,\, j\}$ with $i < j$ and a fixed $d = d^{*}$. If this edge is added to the graph at time $t$ it means that (1) $x_t(j) - x_t (i) \leq 1$,  whereas (2) $x_{t-1}(j) - x_{t-1}(i) > 1$. If we now decrease $d$ and assume that the evolution of the graph is identical up to time $t-1$, then clearly (1) will still hold, while (2) will hold for all $d\in (d^{*} - \varepsilon, \, d^{*}]$ for some $\varepsilon > 0$. Hence, if edges are only added, never deleted, before the first disconnection, then the evolution of the graph during this period will be identical for all $d \in (d^{*} -\varepsilon^{*}, \, d^{*}]$, where $\varepsilon^{*}$ will in general depend on $d^{*}$ - in particular, it will likely be smaller if
the time $L(d^{*})$ at which the disconnection occurs is greater. We may ask whether this is what indeed always happens:

\begin{question}\label{quest:added}
Is it true that, for every $d \in (0, \, 1]$, edges are only added to the receptivity graph, never deleted, before the time $t = L(d)$ at which it disconnects ? Consequently, is it true that there is a decreasing sequence
\begin{equation}\label{eq:dec}
1 = d_0 > d_1 > \cdots
\end{equation}
such that the evolution of the graph up to the first disconnection, and hence the values of the functions $L(d)$ and $M(d)$, are constant on each interval $d \in (d_{i+1}, \, d_i]$ ? Moreover, does the sequence $d_i$ tend to zero ? 
\end{question}

By exhaustive computation we have constructed the sequence (\ref{eq:dec}) down to $d = 1/2$, which is a natural threshold as it marks the point at which the receptivity graph is more than just a chain at $t = 0$. Table \ref{tab:ini} below shows the $12$ different possibilities for the evolution prior to disconnection of an infinite sequence of agents for all $d > 1/2$. The rightmost column in the table describes the precise evolution. Here $\mathcal{A}, \mathcal{B}, \mathcal{C}, \mathcal{D}$ are operators on $l^{\infty}$, considered as bi-infinite matrices. $\mathcal{A}$ is what we called $\widetilde{B}_0$ in Section \ref{sect:pfmain}, the operator corresponding to when the receptivity graph is just a chain. 
$\mathcal{B}$ is what we previously called $\widetilde{B}_4$, corresponding to addition of the edge $\{1,\,3\}$. For $\mathcal{C}$ we in turn add the edges $\{i,\,4\}$, $1 \leq i \leq 2$ and for $\mathcal{D}$ we then add the edges $\{j,\,5\}$, $1 \leq j \leq 3$. $\mathcal{S}$ denotes a shift operator. We take $\bm{y}_0 = d \, \bm{1}_{\infty}$ and the map 
$\bm{y}_0 \mapsto \mathcal{T} \bm{y}_0$ describes the evolution up to time $L(d)$, followed by removing the segment of size $M(d)$ which has become disconnected. 

\begin{table}[ht!]
\begin{center}
\begin{tabular}{|c|c|c|c|c|} \hline 
$i$ & $d_i$ & $L(d_i)$ & $M(d_i)$ & $\mathcal{T} = \mathcal{T}_i$ 
\\ \hline \hline
$0$ & $1$ & $5$ & $3$ & $\mathcal{S}_3 \cdot \mathcal{B} \cdot \mathcal{A}^4$ \\ \hline
$1$ & $\frac{72}{79} \approx 0.9114$ & $4$ & $3$ & $\mathcal{S}_3 \cdot \mathcal{B} \cdot \mathcal{A}^3$ \\ \hline
$2$ & $\frac{864}{1027} \approx 0.8413$ & $9$ & $4$ & $\mathcal{S}_4 \cdot \mathcal{C} \cdot \mathcal{B}^5 \cdot \mathcal{A}^3$ \\ \hline
$3$ & $\frac{31104}{36979} \approx 0.8411$ & $8$ & $4$ & $\mathcal{S}_4 \cdot \mathcal{C} \cdot \mathcal{B}^4 \cdot \mathcal{A}^3$ \\ \hline
$4$ & $\frac{4}{5}$ & $3$ & $3$ & $\mathcal{S}_3 \cdot \mathcal{B} \cdot \mathcal{A}^2$ \\ \hline
$5$ & $\frac{48}{61} \approx 0.7869$ & $7$ & $4$ & $\mathcal{S}_4 \cdot \mathcal{C} \cdot \mathcal{B}^4 \cdot \mathcal{A}^2$ \\ \hline
$6$ & $\frac{1296}{1735} \approx 0.7470$ & $6$ & $4$ & $\mathcal{S}_4 \cdot \mathcal{C} \cdot \mathcal{B}^3 \cdot \mathcal{A}^2$ \\ \hline
$7$ & $\frac{54}{79} \approx 0.6835$ & $5$ & $4$ & $\mathcal{S}_4 \cdot \mathcal{C} \cdot \mathcal{B}^2 \cdot \mathcal{A}^2$ \\ \hline
$8$ & $\frac{2}{3}$ & $5$ & $4$ & $\mathcal{S}_4 \cdot \mathcal{C} \cdot \mathcal{B}^3 \cdot \mathcal{A}$ \\ \hline
$9$ & $\frac{9}{14} \approx 0.6429$ & $4$ & $4$ & $\mathcal{S}_4 \cdot \mathcal{C} \cdot \mathcal{B}^2 \cdot \mathcal{A}$ \\ \hline
$10$ & $\frac{1440}{2459} \approx 0.5856$ & $7$ & $5$ & $\mathcal{S}_5 \cdot \mathcal{D} \cdot \mathcal{C}^3 \cdot \mathcal{B}^2 \cdot \mathcal{A}$ \\ \hline
$11$ & $\frac{69120}{125359} \approx 0.5514$ & $6$ & $5$ & $\mathcal{S}_5 \cdot \mathcal{D} \cdot \mathcal{C}^2 \cdot \mathcal{B}^2 \cdot \mathcal{A}$ \\ \hline
\end{tabular} 
\end{center}
\vspace{0.3cm}
\caption{The possible ``states'' prior to first disconnection for all $d > 1/2$.}
\label{tab:ini}
\end{table}
$\;$ 
To verify that this list is complete can be reduced to  a finite computation. To get a flavour of how this works, consider $d_7 = 54/79$. If 
$\bm{y}_0 := d_7 \bm{1}_{\infty}$ and $\bm{y}_4 := \mathcal{B}^2 \cdot \mathcal{A}^2 \cdot \bm{y}_0$ one may verify that $\bm{y}_{4}(1) = 0$ and $\bm{y}_{4} (2) + \bm{y}_{4} (3) = 1$. This means that the edges $\{1,\,4\}$ and $\{2,\,4\}$ will be added to the graph at $t = 4$ when $d = d_7$ but not when $d > d_7$, so at $d = d_7$ the behaviour changes. If now $\bm{y}_{5} := \mathcal{C} \cdot \bm{y}_4$ then $\frac{1}{d_7} \bm{y}_{5}(4) = \frac{14189}{8640} \approx \frac{1}{0.6089}$, which means that adding the edges $\{1, \, 4\}$ and $\{2, \, 4\}$ at $t = 4$ would result in the removal of edge $\{4, \, 5\}$, and hence a disconnection of the graph, at $t = 5$ for any $d \in \left( \frac{8640}{14189}, \, \frac{54}{79} \right]$. However, this is a faithful description of the evolution only down to $d_8 = 2/3$, because then the behaviour changes already at $t = 1$. For if instead $\bm{y}_0 := d_8 \bm{1}_{\infty}$ and $\bm{y}_1 := \mathcal{A} \cdot \bm{y}_0$ then 
$\bm{y}_{1}(1) + \bm{y}_{1}(2) = 1$ so the edge $\{1, \, 3\}$ is already added 
at $t = 1$. 
\par Now notice from the Table that the quotient $L(d)/M(d)$ is not maximised 
at $d_0 = 1$, it attains greater values of $9/4$ and $8/4 = 2$ at $d_2$ and 
$d_3$ respectively. For values of $d$ in the very narrow interval $(d_3, \, d_2]$,
however, the subsequent evolution is not periodic, rather after a finite amount of time, the behaviour will hop to that exhibited in the interval
$(d_4, \, d_3]$. We will return to this later. However we can prove that, for $d$ lying in some subinterval of $(d_4, \, d_3]$, the behaviour is periodic.

\begin{theorem}\label{thm:slowerd}
There exists a non-empty open subinterval $I \subseteq (d_4, d_3]$ such that for all $d \in I$, the following holds:
\par (i) If the initial configuration $\bm{\mathcal{E}}_{\infty, \, d}$ evolves according to (\ref{eq:update}) then the evolution is periodic: after every eighth time step, a group of four agents disconnect from the left-hand end of the receptivity graph and collapse to a cluster at the subsequent time step.
\par (ii) there are positive constants $C, C^{\prime}$ such that the finite configuration $\bm{\mathcal{E}}_{n, \, d}$ will evolve in an analogous manner, with a group of four agents becoming disconnected at each end after every eighth time step, until there are at most $C$ agents left in the middle. These will collapse to a cluster after at most $C^{\prime}$ further time steps. 
\end{theorem}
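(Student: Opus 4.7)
The plan is to imitate the two-step architecture used in the proofs of Theorems \ref{thm:infmain} and \ref{thm:main}. Fix the composite operator
$\mathcal{T} := \mathcal{T}_3 = \mathcal{S}_4 \cdot \mathcal{C} \cdot \mathcal{B}^4 \cdot \mathcal{A}^3$
from Table \ref{tab:ini}, which encodes the $8$-step dynamics (3 steps with graph-matrix $\mathcal{A}$, 4 steps with $\mathcal{B}$, one step with $\mathcal{C}$) followed by discarding the four leftmost agents that have broken off. I would first establish, for the infinite sequence, the analogue of Proposition \ref{prop:claim1}: there exist $d^{\dagger} \in (d_4, d_3]$ and $\varepsilon_0 > 0$ such that whenever $\bm{y}_0 \in l^{\infty}$ satisfies $\|\bm{y}_0\| \leq d^{\dagger}$ and $\|d^{\dagger} \mathbf{1}_{\infty} - \bm{y}_0\| < \varepsilon_0$, the graphs $G_0, \ldots, G_8$ obtained by successively multiplying by $\mathcal{A}, \mathcal{A}, \mathcal{A}, \mathcal{B}, \mathcal{B}, \mathcal{B}, \mathcal{B}, \mathcal{C}$ are precisely those encoded in the factorisation of $\mathcal{T}$, and that at $t = 9$ agents $1, 2, 3, 4$ have collapsed into a cluster. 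As in the original argument, this is a sandwich: at $\bm{y}_0 = d_3 \mathbf{1}_{\infty}$ the critical row-sum inequalities governing each edge addition or deletion hold with equality or with strict slack that has been tabulated implicitly in the construction of Table \ref{tab:ini}; perturbing by less than $\varepsilon_0$ preserves strictness, and we verify the perturbed inequalities at the two endpoints $\bm{z}_0 = (d^{\dagger} - \varepsilon_0)\mathbf{1}_{\infty}$ and $d^{\dagger} \mathbf{1}_{\infty}$ by a direct numerical computation analogous to (\ref{eq:z1})--(\ref{eq:z5}).

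Next I would prove the analogue of Proposition \ref{prop:claim2}. Writing $\mathcal{T}$ out explicitly as a doubly-infinite matrix, one sees that its first four rows carry all non-translation-invariant information while row $i \geq 5$ is a rightward shift of row $4$. Setting $a_i := \sum_{j} \mathcal{T}(i, j)$ for $i = 1, 2, 3, 4$, I would establish by induction on $\tau$ a bound of the form
\begin{equation*}
d^{\dagger} - (\mathcal{T}^{\tau} d^{\dagger} \mathbf{1}_{\infty})(i) \;<\; g_i,
\end{equation*}
where $(g_i)$ is an explicit geometrically-decaying sequence controlled by a single parameter $\gamma \in (0, 1)$. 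As in the proof of Proposition \ref{prop:claim2}, the induction step reduces to verifying that finitely many polynomial inequalities $f_i(\gamma) \leq 0$ (for $i = 1, 2, 3, 4$, plus one bulk polynomial $f_5$ covering all remaining rows) have a common admissible $\gamma$. This is a finite \texttt{Matlab} check, which selects $\gamma$ in the intersection of the five feasibility intervals.

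The main obstacle, and the reason the theorem only asserts the existence of a subinterval $I \subseteq (d_4, d_3]$, is the need to couple these two verifications. The parameter $d^{\dagger}$ must be chosen strictly inside $(d_4, d_3]$ so that (a) the threshold-crossing slacks exploited in the graph argument exceed $g_1$, and (b) $g_1$ also lies well below the distance from $d^{\dagger}$ to each neighbouring threshold $d_2, d_4$ — since otherwise the iterated dynamics could drift into a regime where a different edge is gained or lost and the operator would silently switch from $\mathcal{T}_3$ to $\mathcal{T}_2$ or $\mathcal{T}_4$. The combinatorial width $d_3 - d_4 \approx 0.0411$ is small, so it is not a priori clear that all the strict inequalities can be satisfied simultaneously; I would choose $d^{\dagger}$ roughly midway between $d_4$ and $d_3$ and then take $I$ to be an open neighbourhood of $d^{\dagger}$ on which all the perturbed inequalities remain valid. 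A numerical check is the only place where genuinely computer-assisted verification is needed.

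Part (ii) is then an adaptation of the machinery at the end of Section \ref{sect:pfmain}. For a finite configuration $\bm{\mathcal{E}}_{n, d}$ with $d \in I$, evolution is symmetric about the midpoint, so one introduces replacement operators $\mathcal{R}_{\tau}$ that reflect the truncated middle portion of the gap vector, bounds the discrepancy $\delta_{\tau} := \|\bm{z}_{\tau} - \bm{y}_{\tau}\|$ by the telescoping estimate
\begin{equation*}
\delta_{\tau+1} \;\leq\; \delta_{\tau} + \|\mathcal{R}_{\tau} \bm{y}_{\tau} - \bm{y}_{\tau}\|,
\end{equation*}
and sums the geometric tail $\sum g_{1 + 4k}$ — noting that each iteration now removes four agents from each side — to conclude that the total accumulated error stays well below $\varepsilon_0$. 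Hence the finite dynamics tracks the infinite dynamics exactly up to the moment at which only a bounded number $C$ of middle agents remain. These residual middle configurations are classified by $n \bmod 8$, giving finitely many deterministic endgame scenarios, each of which collapses within a fixed number $C^{\prime}$ of further steps — exhaustively verified in \texttt{Matlab} exactly as for Cases 1--6 in the proof of Theorem \ref{thm:main}. Since the number of periods required is $\lfloor n/8 \rfloor + O(1)$ and each consumes $8$ time steps, the freezing time is $n + O(1)$, as asserted.
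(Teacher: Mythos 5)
Your overall architecture --- an analogue of Proposition \ref{prop:claim1} on the interval $(d_4,\,d_3]$, an analogue of Proposition \ref{prop:claim2} for the iterates of $\mathcal{T}_3$, and the replacement-operator machinery for the finite case --- is exactly the paper's. But you have transplanted two estimates from the $d=1$ case that fail for $\mathcal{T}_3$, and repairing them is the actual content of the paper's proof. First, your inductive hypothesis $d^{\dagger} - (\mathcal{T}^{\tau} d^{\dagger}\mathbf{1}_{\infty})(i) < g_i$ is one-sided: it presumes, as in Proposition \ref{prop:claim2}(i), that the iterates decrease. They do not. The sums of the first three rows of $\mathcal{T}_3$ are $a_3 \approx 1.0054$, $b_3 \approx 1.0056$ and $c \approx 1.0014$, all strictly greater than one, so $\mathcal{T}_3(d^{\dagger}\mathbf{1}_{\infty}) \not\leq d^{\dagger}\mathbf{1}_{\infty}$ and the leading gaps grow \emph{above} $d^{\dagger}$ at every period. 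Since the matrix has non-negative entries, a lower bound on $\bm{y}_{\tau}$ propagates only to a lower bound on $\bm{y}_{\tau+1}$; your induction therefore gives no control from above, and without that you cannot re-apply the graph lemma at the next period (the gaps could in principle drift past $d_3$, or past a threshold at which an edge is added at the wrong time). You need two-sided bounds $|\bm{y}_{\tau}(i) - d^{\dagger}| \leq g_i$, and the left-hand inequalities then contribute their own polynomial conditions to the finite check --- this is (\ref{eq:newbound1})--(\ref{eq:newbound4}), with $\gamma_3 = 0.3107$. (You do gesture at the danger of drifting toward $d_2$ in your coupling discussion, but the bound you propose to prove cannot rule that drift out.)

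Second, the telescoping estimate $\delta_{\tau+1} \leq \delta_{\tau} + \|\mathcal{R}_{\tau}\bm{y}_{\tau} - \bm{y}_{\tau}\|$ is obtained in (\ref{eq:Tbort}) by discarding a factor of $\|\mathcal{T}\|$, which is legitimate for $\mathcal{T}_0$ (norm one) but not for $\mathcal{T}_3$, whose norm is $b_3 > 1$; applied naively over $\tau \approx n/8$ periods it gives a bound growing like $b_3^{\tau}$, which is useless. The paper's fix is structural: $\mathcal{T}_3$ is a band matrix whose rows from the fifth onward sum to exactly one, so multiplication by $\mathcal{T}_3$ does not magnify the discrepancy (which is supported near coordinate $r - 4\tau$) until $\tau \geq m_3 - 4$, and only the last $O(1)$ periods pick up factors of $b_3$, yielding (\ref{eq:newdeltat}). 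A smaller slip in the same direction: the non-shift-invariant part of $\mathcal{T}_3$ is its first \emph{five} rows (a $5 \times 17$ block), not four --- row $4$ has sum $\frac{4373}{4374} \neq 1$ while rows from the fifth onward have sum one, so row $5$ is not a shift of row $4$ and the bulk polynomial condition only covers rows from the sixth onward. None of this changes the shape of your argument, but as written the two estimates at its core do not hold for $\mathcal{T}_3$.
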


Note that the statement in the finite case is slightly less precise than in Theorem \ref{thm:main}. This is because, already for $n = 7$, the evolution of the graph is not constant for $d \in (d_4, \, d_3]$. As may be verified by direct computation, three different things can happen and the freezing time can be $8, \, 9$ or $10$. In any case, the important point about Theorem \ref{thm:slowerd}(ii) is that the freezing time is $n + O(1)$.    

\begin{proof}
The proof is completely analogous to that given in Section \ref{sect:pfmain} so we only present a sketch. Table \ref{tab:ini} yields an immediate analogue of Proposition \ref{prop:claim1}, namely: for an infinite sequence of agents with initial spacings given by $\bm{y}_0 \in l^{\infty}$, if $d_4 \bm{1}_{\infty} < \bm{y}_0 \leq d_3 \bm{1}_{\infty}$, then the evolution up to $t = 8$ is identical to that
described in the $i=3$ row of the table, i.e.: 
\begin{equation*}
\bm{y}_{t} = \mathcal{A}^t \bm{y}_0 \;\; {\hbox{for $0 \leq t \leq 3$}}, \;\;\;\;\;
\bm{y}_{t} = \mathcal{B}^{t-3} \bm{y}_3 \;\; {\hbox{for $3 \leq t \leq 7$}}, \;\;\;\;\;
\bm{y}_8 = \mathcal{C} \bm{y}_7.
\end{equation*}
Secondly, consider the bi-infinite matrix $\mathcal{T}_3$. Analogously to (\ref{eq:Tblock}) and (\ref{eq:Trest}), all the ``action'' takes place in the upper-left
$5 \times 17$ block, and every row from the sixth onwards is just a shift to the right of the previous row. The sum of the entries in every row from the fifth onwards equals one. Let $a,\, b, \, c, \, d$ respectively denote the sums of the entries in the first four rows. One can check that
\begin{equation}\label{eq:newdrowsums}
a_3 = \frac{375281}{373248} \approx 1.0054, \;\; b_3 = \frac{281497}{279936} \approx 1.0056, 
\;\; c = \frac{7787}{7776} \approx 1.0014, \;\; d = \frac{4373}{4374} 
\approx 0.9998.
\end{equation}
One can prove the following analogues of (\ref{eq:bound1})-(\ref{eq:bound3}), the proof reduced as before to solving (in \texttt{Matlab}) a finite collection of polynomial equations:
\\
\par \emph{Let $\bm{y}_0 := \bm{1}_{\infty}$ and for all $\tau \geq 0$, $\bm{y}_{\tau + 1} := \mathcal{T}_{3} \, \bm{y}_{\tau}$. Let $\gamma_{3} := 0.3107$. Then for all $\tau \geq 1$, 
\begin{eqnarray}
a_3 - 1 \leq \bm{y}_{\tau}(1) - 1 \leq (1+\gamma_3) (b_3 - 1), \label{eq:newbound1} \\
b_3 - 1 \leq \bm{y}_{\tau} (2) - 1 \leq (1+\gamma_{3}) (b_3 - 1), \label{eq:newbound2} \\
c-1 \leq \bm{y}_{\tau} (3) - 1 \leq \gamma_3 (b_3 - 1), \label{eq:newbound3} \\
|\bm{y}_{\tau}(i) - 1| \leq \gamma_{3}^{i-2} (b_3 - 1), \;\;\; \forall \, i \geq 4.
\label{eq:newbound4}
\end{eqnarray}}
$\;$ \par 
Note that, according to (\ref{eq:newdrowsums}), there are row sums both above and below one, and thus the sequence $\bm{y}_{\tau}$ is not monotonic as in Proposition \ref{prop:claim2}(i). Thus the left-hand inequalities in (\ref{eq:newbound1}) and (\ref{eq:newbound2}) also contribute to the collection of polynomial equations to be solved here. In addition, because of the lack of monotonicity, it is not immediately obvious that $(\bm{y}_{\tau})$ converges in $l^{\infty}$ to a fixed point of $\mathcal{T}_3$. Though computations suggest this is definitely the case, we have not tried to prove it as we don't need the result. 
\par Now it follows from (\ref{eq:newbound1})-(\ref{eq:newbound4}) that, for all $\tau \geq 0$ and $d \in (d_4 \, d_3]$, 
$||d\bm{1}_{\infty} - d\bm{y}_{\tau}|| \leq$ 
\\ $d_3 (1+\gamma_3)(b_3 - 1) \approx 
0.0061$. This is much less than half the length
of the interval $(d_4, \, d_3]$. So as long as we don't choose $d$ too close to the edges of the interval, the behaviour described in part (i) of Theorem \ref{thm:slowerd} for an infinite sequence of agents has been established. In the case of a finite sequence of $n$ agents there will be additional ``perturbations'' once agents in each half of the sequence affect one another, and these can be analysed by introducing operators $\mathcal{R}_{3,\, \tau}, \, \Theta_{3,\, \tau} = \mathcal{T}_{3} \, \mathcal{R}_{3,\, \tau}$ and vectors $\bm{z}_{\tau} = \bm{z}_{3,\,\tau}$ analogous to (\ref{eq:replace}) and (\ref{eq:yzdefs}). In (\ref{eq:replace}), $\mathcal{R}_{\tau} = \mathcal{R}_{0,\, \tau}$ replaced elements of an input $\bm{x}$ in groups of five and moved this window three steps to the left as $\tau$ increased. In the present case, $\mathcal{R}_{3,\, \tau}$ will replace elements eight at a time and move the window four steps to the left each time. In imitating the argument on page 8, we must exercise a little caution since $||\mathcal{T}_3|| = b_3 > 1$, which in (\ref{eq:Tbort}) would lead to an exponentially growing bound for $\delta_{\tau}$. However, since $\mathcal{T}_{3}$ is essentially a ``band matrix'' and the sum of the entries in each row from the fifth onwards is one, multiplying by $\mathcal{T}_3$ will not magnify errors until $\tau \geq m_3 - O(1)$, in fact until $\tau \geq m_3 - 4$, where now
\begin{equation}\label{eq:m3}
m_3 = \left\{ \begin{array}{lr} \lfloor \frac{r-9}{4} \rfloor, & {\hbox{if $n$ is even}}, \\ \lfloor \frac{r-8}{4} \rfloor, & {\hbox{if $n$ is odd.}} \end{array} \right. \;\;\;\; (r = n/2).    
\end{equation}
Indeed we still have much more margin for error here than in Section \ref{sect:pfmain} so we can continue replacement up to $\tau = m_3 + 1$ and obtain the following analogues of (\ref{eq:triangle}) and (\ref{eq:deltat}):
\begin{equation}\label{eq:newtriangle}
||\bm{z}_{m_3 + 1} - \bm{1}_{\infty}|| \leq \delta^{*}_{\infty} + g^{*}_{1},
\end{equation}
where
\begin{equation}\label{eq:newdeltat}
g^{*}_{1} = d_3 (1+\gamma_3)(b_3 - 1) \approx 0.0061, \;\;\; \delta^{*}_{\infty} = d_3 \cdot b_{3}^{4} \cdot \sum_{k=0}^{\infty} |g^{*}_{1+4k}| = d_3 \cdot b_{3}^{4} \cdot \left[ g^{*}_{1} + (b-1)\left( \frac{\gamma_{3}^{3}}{1-\gamma_{3}^{4}} \right) \right] \approx 0.0054,
\end{equation}
Thus $||\bm{z}_{m_3 + 1} - \bm{1}_{\infty}|| < 0.0116$. 
This is still less than
$\frac{1}{2} (\delta_3 - \delta_4)$, which proves that the evolution of the configuration of $n$ agents up to time $t = 8(m_3 +1)$ will be periodic, at least for all $d$ in some sufficiently small interval around the centre of $(d_4, \, d_3]$. At time $8(m_3 + 1)$ there will be somewhere between $8$ and $15$ agents left, depending on $n \, ({\hbox{mod $8$}})$, so the proof of the theorem is complete.   
\end{proof} 

Theorems \ref{thm:infmain} and \ref{thm:slowerd} prove that, at least for some values of $d > 1/2$, the evolution of the configuration $\bm{\mathcal{E}}_{\infty, \, d}$ is periodic, with $M(d)$ agents becoming disconnected on the left after every $L(d)$ time steps. Such simple periodicity does not hold for arbitrary $d$. Indeed, if $d$ is close to some $d_i$, $i > 0$, and hence close to the boundary between the two intervals $(d_{i}, \, d_{i-1}]$ and $(d_{i+1}, \, d_i]$, then the evolution can jump from one ``state'' to another after a finite time. This is why the ratio $L(d_2)/M(d_2) = 9/4 > 8/4$ does not yield a configuration which converges even more slowly than in Theorem \ref{thm:slowerd}. Recall from Table \ref{tab:ini} that the interval $(d_3, \, d_2]$ is very narrow. One can check that, starting with $d = d_2$, a group of $4$ agents will disconnect at $t = 9$, as in Table \ref{tab:ini}, but at this point the gap between the first two remaining agents will be $\frac{13349725}{15971904} \approx 0.8358 < d_3$, which will suffice for the evolution to jump into a new state whereby the next group of $4$ agents disconnect after $8$ further steps (at $t = 17$). In fact, the system will remain in that state forever, as can be proven by explicit computation and following the proof of Theorem \ref{thm:slowerd}. 
\par In fact, a system can take arbitrarily long to jump from one state to another. To see this, we consider values of $d$ slightly above $d_1 = 72/79$. 
For $n \geq 0$ let $\bm{y}_n := \mathcal{T}_{0}^{n} \bm{1}_{\infty}$, $\bm{z}_n := \mathcal{A}^3 \bm{y}_n$ and 
$d^{n+1} := \frac{1}{\bm{z}_{n}(1) + \bm{z}_{n}(2)}$. By Proposition \ref{prop:claim2}, we know that the sequence $(\bm{y}_n)$ is monotonically decreasing in $l^{\infty}$ and
converging toward a fixed point $\bm{y}_{\infty}$ of $\mathcal{T}_0$. Since, as one can readily check, the matrix $\mathcal{A}^3$ has non-negative entries and row sums at most one, the same is true of the sequence $(\bm{z}_n)$. Hence, $(d^n)$ is an increasing sequence, starting from $d^1 = d_1 = 72/79$ and converging to a limit $d^{\infty} = \frac{1}{\bm{z}_{\infty}(1) + \bm{z}_{\infty}(2)}$, which numerically is about $0.921776...$. It is easy to see that, if $d \in (d^{n-1}, d^n]$, then starting from the configuration $\bm{\mathcal{E}}_{\infty, \, d}$, it will happen $n$ times that $3$ agents disconnect after $5$ time steps, whereas on the $(n+1)$:st occasion, 3 agents will disconnect after $4$ steps instead. Indeed, one can check (though the computations become extremely messy) that the system will then forever remain in the latter state. 
This leads us to our final question:

\begin{question}\label{quest:ultimate}
Is it true that, for any $d \in (0, \, 1]$, the evolution of the system $\bm{\mathcal{E}}_{\infty, \, d}$ is ultimately periodic, in the sense that both the time between successive disconnections and the number of agents which disconnect are constant from some point onwards ? 
\end{question}

Given the ideas introduced in this paper, answering this last question for $d > 1/2$ at least could perhaps be reduced to a finite, if extremely messy computation. However, this would not yield much insight into what happens as $d \rightarrow 0$, which is the main theme behind all the questions posed in this section. The proofs in this paper all relied heavily on explicit numerical estimates (as in (\ref{eq:bound1})-(\ref{eq:bound3}) and (\ref{eq:newbound1})-(\ref{eq:newbound4})). To push the work further, it seems that a deeper qualitative understanding of the evolution of sequences of equally spaced agents and the associated linear operators on $l^{\infty}$ will be needed.   
    


\vspace*{1cm}


\begin{thebibliography}{ABRR} 

\bibitem[BBCN]{BBCN} A. Bhattacharya, M. Braverman, B. Chazelle and H. L. Nguyen, \emph{On the convergence of the Hegselmann-Krause system}, Proceedings of the 4th Innovations in Theoretical Computer Science conference (ICTS 2013), Berkeley CA, January 2013.



\bibitem[C]{C} B. Chazelle, \emph{The total $s$-energy of a multiagent system}, SIAM J. Control Optim. \textbf{49} (2011), No. 4, 1680--1706.


\bibitem[HK]{HK} R. Hegselmann and U. Krause, \emph{Opinion dynamics and bounded confidence: models, analysis and simulations}, Journal of Artificial Societies and Social Simulation \textbf{5} (2002), No. 3. Fulltext at \texttt{http://jasss.soc.surrey.ac.uk/5/3/2/2.pdf}  


\bibitem[Kr]{Kr} U. Krause, \emph{Private communication}. 

\bibitem[Ku]{Ku} S. Kurz, \emph{How long does it take to consensus in the Hegselmann-Krause model ?} Preprint at \texttt{http://arxiv.org/abs/1405.5757}

\bibitem[MBCF]{MBCF} S. Martinez, F. Bullo, J. Cortes and E. Frazzoli, \emph{On Synchronous Robotic Networks - Part II: Time Complexity of Rendezvous and Deployment Algorithms}, IEEE Trans. Automat. Control \textbf{52} (2007), No. 12, 2214--2226.
  
\bibitem[WH]{WH} E. Wedin and P. Hegarty, \emph{A quadratic lower bound for the convergence rate in the one-dimensional Hegselmann-Krause bounded confidence dynamics}. Preprint at \texttt{http://arxiv.org/abs/1406.0769}


\end{thebibliography}
\end{document}